\documentclass[12pt]{extarticle}
\usepackage{extsizes}
\usepackage{amsmath}
\usepackage{amssymb}
\usepackage{amsthm}
\usepackage{amscd}
\usepackage{amsfonts}
\usepackage{dsfont}
\usepackage{fancyhdr}
\usepackage{enumerate}
\usepackage{youngtab}
\usepackage[utf8]{inputenc}
\usepackage{comment}
\usepackage{cite}
\usepackage{tikz}
\usetikzlibrary{arrows.meta}

\usepackage{subfigure}
\usepackage[margin=1.5in]{geometry}
\usepackage{graphicx}
\usepackage{algorithm}
\usepackage{algpseudocode}
\usepackage{color}
\usepackage[hidelinks]{hyperref}
\usepackage{notation}

\theoremstyle{plain}
\newtheorem{theorem}{Theorem}
\newtheorem*{theoremnocounter}{Theorem}
\newtheorem{corollary}[theorem]{Corollary}

\newtheorem{definition}[theorem]{Definition}
\newtheorem{lemma}[theorem]{Lemma}

\newtheorem{proposition}[theorem]{Proposition}

\begin{document}

\title{Compact graphs and quantum automorphisms}
\author{Pedro Baptista\thanks{Corresponding author: pedro.baptista@dcc.ufmg.br \\ Pedro Baptista and Gabriel Coutinho at the Department of Computer Science, Universidade Federal de Minas Gerais (Brazil). Chris Godsil at the Department of Combinatorics and Optimization, University of Waterloo (Canada). Simon Schmidt at the Department of Computer Science, Ruhr-Universität Bochum (Germany).} \and Gabriel Coutinho \and Chris Godsil \and Simon Schmidt}
\date{\today}
\maketitle
\vspace{-0.8cm}

\begin{abstract} 
Compact graphs are graphs for which the fractional automorphism polytope has no genuinely fractional vertices.  This paper proposes a quantum analogue of this idea by evaluating the fundamental magic unitary of the quantum automorphism group on states, which we show to produce a closed convex set of doubly stochastic matrices sitting between the classical automorphism polytope and the full fractional automorphism polytope. Our main result is that the natural quantum analogue of compactness is classical, that is, a quantum compact graph is classically compact.

We also relate this set to the quantum orbital algebra and obtain a hierarchy of classical and quantum compactness pseudo notions.  The framework recovers familiar consequences of compactness through commutants and suggests quantum analogues of generous transitivity and distance-transitivity.  We also isolate examples and open problems indicating where quantum symmetries may strictly refine the classical compactness theory.
\end{abstract}

\begin{center}
    \textbf{Keywords}\\
    compact graphs; quantum automorphism group; coherent algebras. 
\end{center}

\section{Introduction}

Polyhedral relaxations to combinatorial problems are a common theme in combinatorial optimization, and often the central question is whether the relaxation is tight. Tinhofer \cite{tinhofer1986graph} introduced the notion of Birkhoff graphs, later named compact graphs by Brualdi in \cite{brualdi1988some}, as the class of graphs for which the polytope of doubly stochastic matrices that commute with the adjacency matrix of the graph has only permutation matrices as vertices, that is, graphs for which the fractional automorphism polytope is integral. Tinhofer also showed in \cite{TINHOFER1991253} that deciding isomorphism from a compact graph to another graph can be done in polynomial time. Godsil \cite{GODSIL1997Compact} studied compact graphs and their connection to equitable partitions, distance-regularity and transitivity. An extension of the notion of compactness to permutation groups and coherent (cellular) algebras was given in \cite{EVDOKIMOV1999247}, a topic we will soon connect to our work.

Quantum automorphisms of graphs were introduced by Banica \cite{banica2005quantum}, following Wang's work on quantum permutation groups \cite{Wang1998QuantumSymmetryFiniteSpaces}. Lupini et~al. \cite{LUPINI2020108592} studied their connections to nonlocal games and coherent algebras, and among other things defined what they called the quantum orbital algebra of a graph, an object that will be central to our work.

Classically, compactness says that the fractional automorphism relaxation has no genuinely fractional points: every doubly stochastic matrix commuting with the adjacency matrix is forced to come from a convex combination of automorphisms. Quantum automorphism groups provide a different relaxation of symmetry, one in which the entries of the fundamental magic unitary need not commute. The guiding question of this paper is whether evaluating these noncommutative entries on states produces a useful intermediate object between the classical automorphism polytope and the full fractional automorphism polytope, and whether the finite-dimensional algebra of quantum orbitals captures the corresponding notion of quantum compactness.

\subsection{Quantum automorphism group}

Quantum permutation groups were introduced by Wang \cite{Wang1998QuantumSymmetryFiniteSpaces} as noncommutative analogues of permutation groups. Banica \cite{banica2005quantum} then defined quantum automorphism groups of finite graphs by imposing the condition that the fundamental magic unitary commutes with the adjacency matrix. More recently, quantum automorphisms have been connected to nonlocal games, quantum isomorphism, and coherent configurations; in particular, Lupini, Man\v{c}inska and Roberson \cite{LUPINI2020108592} showed that quantum orbitals form a coherent algebra.

\begin{definition} 
\label{defQuantumAutomorphismGroup}
    Let $X$ be a graph with $n$ vertices. The quantum 
    automorphism group $\qGroup (X)$ is the compact matrix quantum 
    group $(C(\qGroup (X)), u)$, where $C(\qGroup (X))$ is the 
    universal $C^*$-algebra with unit $1$ and generators $u_{ij}$, where $u$ is a magic unitary. The generators are subject to the following relations:

    \begin{enumerate}[(i)]
        \item $u_{ij}^2 = u_{ij}^* = u_{ij}$, for $1 \leq i,j \leq n$,
        \item $\sum_{i=1}^n u_{ij} = 1 = \sum_{i=1}^n u_{ji}$, for $1 \leq j \leq n$,
        \item $u A = A u$, where $A$ is the adjacency matrix of $X$.
    \end{enumerate}

\end{definition}

Note that we call a matrix $u$ \emph{magic unitary} if $(i)$ and $(ii)$ are fulfilled. Furthermore, by $(iii)$ we mean the relations
\begin{align*}
\sum_k u_{ik}A_{kj}=\sum_k A_{ik}u_{kj} \quad \text{for all }1\leq i,j\leq n. 
\end{align*}
 Characters of $C(\qGroup(X))$ recover the ordinary automorphisms of $X$. More generally, states on this $C^*$-algebra give numerical doubly stochastic matrices by evaluating the entries of $u$; these matrices will be the basic convex objects considered below.

\subsection{Matrix algebras containing the adjacency matrix} \label{secCommutants}

Let $A = A(X)$ be the adjacency matrix of a graph $X$, and let $J$ denote the all-ones matrix. The smallest algebra which is interesting to us is the real algebra generated by $A$ and $J$, which we shall denote by
\[ \polAlg = \langle \{A, J\} \rangle.
\]
Note that if $X$ is a connected regular graph, then the dimension of this algebra is equal to the number of distinct eigenvalues of the graph.

The next algebra we consider is the smallest coherent algebra containing $A$. An algebra is called coherent if it is self-adjoint, contains $I$ and $J$, and is closed under Schur (entrywise) multiplication. We shall denote the smallest coherent algebra containing $A$ by $\smallAlg$. Note that it always exists and is unique: the full matrix algebra is coherent, and the intersection of any two coherent algebras containing $A$ is a coherent algebra containing $A$. Furthermore, it holds $\polAlg \subseteq \smallAlg$, and the inclusion can be strict, despite the fact that for most graphs $\polAlg$ is already the full matrix algebra \cite{godsil2012controllable,o2016conjecture}.

We point out that both algebras mentioned above can be computed in polynomial time: this is immediate for $\polAlg$, and $\smallAlg$ can be computed via the $2$-dimensional Weisfeiler--Leman algorithm \cite{weisfeiler1968reduction}.

A third algebra we consider is the coherent algebra generated by the orbitals of the automorphism group of $X$. The orbitals of a group action are the orbits of the induced action on pairs of vertices. The adjacency matrices of the orbitals generate a coherent algebra, which coincides with the algebra of matrices that commute with all permutation matrices corresponding to the automorphisms of $X$. We shall denote this algebra by $\orbAlg$. Note that $\smallAlg \subseteq \orbAlg$, and the inclusion can be strict; see, for instance, the examples and discussion in \cite{GODSIL1997Compact,EVDOKIMOV1999247}.

Following the work of Lupini et~al. \cite[Section 3]{LUPINI2020108592}, we introduce our last coherent algebra: the algebra generated by the quantum orbitals of $X$. The quantum orbitals of $X$ are the orbitals of the action of the quantum automorphism group of $X$ on pairs of vertices. The adjacency matrices of the quantum orbitals generate a (real) coherent algebra, which coincides with the algebra of real matrices \(M\) satisfying \(Mu=uM\), where \(u\) is the fundamental magic unitary. We shall denote this algebra by $\qOrbAlg$.

As a summary, we have the following sequence of algebras containing $A$:
\[
\polAlg \subseteq \smallAlg \subseteq \qOrbAlg \subseteq \orbAlg.
\]
The inclusions may be strict; examples can be extracted from the classical theory of coherent algebras and from the quantum-orbital examples in \cite{LUPINI2020108592,schmidt2020disttrans, schmidt2024quantum}.

Taking the commutant of each of these algebras reverses the inclusions, and we have the following sequence of algebras that commute with $A$:
\begin{equation}
    \orbAlg' \subseteq \qOrbAlg' \subseteq \smallAlg' \subseteq \polAlg'. \label{commutants}
\end{equation}
The next subsection is concerned with certain subsets of these algebras.

As a final remark, all these algebras are finite-dimensional semisimple $*$-algebras; therefore, by the finite-dimensional double centralizer theorem, each of them coincides with the commutant of its commutant. We will use this fact repeatedly below; see, for instance, the finite-dimensional form of the bicommutant theorem in standard operator-algebra references such as \cite{conway2019course}.

\subsection{Matrices commuting with the adjacency matrix} \label{secDSCommuting}

Let $A$ be the adjacency matrix of a graph. Define
\[S(A) = \{M : M A = A M, M \text{ is doubly stochastic}\},\]
and let
\[P(A) = \convhull \{P : P A = A P, P \text{ is a permutation matrix}\}.
\]
It follows that $P(A) \subseteq S(A)$, and a graph is \emph{compact} if $P(A) = S(A)$. Interestingly, the fact that $\polAlg$ is already the full matrix algebra for almost all graphs \cite{godsil2012controllable,o2016conjecture} implies that almost all graphs are compact\footnote{This fact has not been spelled out in the literature, as far as we were able to determine.}. 

Let $\ds$ be the operator that takes a set of matrices and returns the subset of all doubly stochastic matrices in it. We may apply $\ds$ at \eqref{commutants}, and if $A = A(X)$, it follows that this chain is sandwiched between $P(A)$ and $S(A)$:
\begin{equation}
P(A) \subseteq \ds (\orbAlg') \subseteq \ds (\polAlg') = S(A). \label{equationClassicalInclusions}
\end{equation}
Evdokimov et al.~\cite{EVDOKIMOV1999247} defined a graph to be \textit{weakly compact} if $P(A) = \ds(\smallAlg')$.

Let $u_{ij}$ be the generators of the quantum automorphism group of $X$, as per Definition~\ref{defQuantumAutomorphismGroup}. A \textit{state} on a $C^*$-algebra is a positive unital linear functional, and the set of all states on a $C^*$-algebra is called its \textit{state space}.
The main object that motivates this paper is the following. 
\begin{definition}\label{defqfa}
The set of quantum fractional automorphisms is defined by
\begin{equation}\label{qsa}
QP(A) = \{ M : M_{ij} = \rho(u_{ij}) \text{ for some } \rho \text { state on } C(\qGroup (X))\}.
\end{equation}
\end{definition}

Note that any matrix $M$ with $M_{ij} = \rho(u_{ij})$ is automatically doubly stochastic, since $u=(u_{ij})$ is a magic unitary.

\subsection{Our results}

We prove several results about the set $QP(A)$. The main point is that $QP(A)$ is the image of the state space of $C(\qGroup(X))$ under the finite list of coordinate evaluations $u_{ij}$, and its commutant recovers the quantum orbital algebra. 

\begin{theoremnocounter}
The set of quantum fractional automorphisms $QP(A)$  is closed, convex and it holds
\begin{equation}
P(A) \subseteq QP(A) \subseteq \ds(\qOrbAlg') \subseteq S(A).
 \label{equationQuantumInclusions}
\end{equation}
\end{theoremnocounter}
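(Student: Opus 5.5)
The plan is to treat $QP(A)$ as the image of the state space $\mathcal{S}$ of $C(\qGroup(X))$ under the affine map $\Phi:\rho\mapsto(\rho(u_{ij}))_{i,j}$, and then verify the properties and inclusions one at a time.

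\emph{Convexity and closedness.} The state space $\mathcal{S}$ is convex, and it is weak$^*$-compact by Banach--Alaoglu, because states are the positive functionals of norm one and this set is weak$^*$-closed. The map $\Phi$ is affine. It is also weak$^*$-continuous into $\mathbb{R}^{n\times n}$, since each coordinate is evaluation at a fixed element. Hence $QP(A)=\Phi(\mathcal{S})$ is a continuous affine image of a compact convex set, so it is compact and convex. The values are real because each $u_{ij}$ is self-adjoint and states are hermitian.

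\emph{First inclusion $P(A)\subseteq QP(A)$.} Let $P$ be a permutation matrix commuting with $A$. The scalars $P_{ij}\in\{0,1\}$ satisfy relations (i)--(iii), so by universality there is a $*$-homomorphism $\chi:C(\qGroup(X))\to\mathbb{C}$ with $\chi(u_{ij})=P_{ij}$. A character is a state, so $P\in QP(A)$. Since $QP(A)$ is convex, it contains all of $P(A)$.

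\emph{Second and third inclusions.} Take $M=\Phi(\rho)$ and $B\in\qOrbAlg$. By the characterization of $\qOrbAlg$ recalled above, $Bu=uB$, meaning $\sum_k B_{ik}u_{kj}=\sum_k u_{ik}B_{kj}$ holds in $C(\qGroup(X))$. Applying the linear functional $\rho$ to both sides gives $BM=MB$, so $M\in\qOrbAlg'$. As already noted, $M$ is doubly stochastic because $u$ is a magic unitary and $\rho$ is unital, and its entries are nonnegative because $u_{ij}=u_{ij}^*u_{ij}\ge0$. This proves $QP(A)\subseteq\ds(\qOrbAlg')$. The final inclusion follows by applying $\ds$ to the chain \eqref{commutants}, which gives $\ds(\qOrbAlg')\subseteq\ds(\polAlg')=S(A)$. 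Here $\polAlg'$ is exactly the set of matrices commuting with $A$, since commuting with $J$ is automatic for doubly stochastic matrices.

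The only step needing care is the second inclusion. It relies on the description of $\qOrbAlg$ as the real matrices $B$ with $Bu=uB$ (Lupini et~al.), and on the relation surviving evaluation. The latter is immediate because the relation is linear in the $u_{ij}$ with scalar coefficients. Everything else is routine.
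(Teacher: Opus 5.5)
Your proof is correct and follows essentially the same route as the paper: $QP(A)$ is the continuous affine image of the weak-$*$ compact convex state space, characters give $P(A)\subseteq QP(A)$, applying the state to $Bu=uB$ gives the middle inclusion, and the last inclusion comes from the commutant chain. The paper's proof also establishes the converse equality $QP(A)'=\qOrbAlg$, using the fact that states separate points, but the inclusions you prove only need the easy direction that you give.
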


In Section~\ref{secQuantumfrac} we prove the theorem above.

In Section~\ref{secCompactness} we discuss when the inclusions displayed above in \eqref{equationClassicalInclusions} and \eqref{equationQuantumInclusions} become equalities. We define notions of (pseudo) classical or quantum compactness, and characterize some of those which are non-trivial. Although one might initially expect that $P(A) \subseteq \ds (\orbAlg')$ and $QP(A) \subseteq \ds(\qOrbAlg')$ are always equalities, it is known \cite{EVDOKIMOV1999247} that the normalized adjacency matrix of the Petersen graph exposes an inequality of the first. Our main result in this section is the theorem below, which states that there are no non-classical examples of quantum compactness.

\begin{theoremnocounter}
If $QP(A) = S(A)$, then $P(A) = S(A)$.
\end{theoremnocounter}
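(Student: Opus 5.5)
The plan is to argue by extreme points. $S(A)$ is a polytope, so $P(A)=S(A)$ holds once every vertex of $S(A)$ is shown to be a permutation matrix. Fix a vertex $M$ of $S(A)$. Choose a real matrix $C$ such that the linear functional $f(N)=\sum_{i,j}C_{ij}N_{ij}$ attains its maximum over $S(A)$ \emph{uniquely} at $M$. Since $QP(A)=S(A)$ by hypothesis, $M$ is also the unique maximizer of $f$ over $QP(A)$. Set $x=\sum_{i,j}C_{ij}u_{ij}$, a self-adjoint element of $C(\qGroup(X))$. Then $\max_{QP(A)} f=\sup_\rho \rho(x)=\max\sigma(x)=:\lambda$, and the maximum is attained. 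Let
\[
\mathcal F=\{\rho \text{ state} : \rho(x)=\lambda\},
\]
which is a nonempty weak$^*$-closed face of the state space. By uniqueness of the maximizer, every $\rho\in\mathcal F$ satisfies $\rho(u_{ij})=M_{ij}$ for all $i,j$. So the whole face $\mathcal F$ is ``frozen'' on the generators, and the aim is to derive a contradiction from a fractional entry.

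Two standard facts about states in $\mathcal F$ will be used. First, $\lambda-x\ge 0$ and $\rho(\lambda-x)=0$, so the Cauchy--Schwarz inequality gives $\rho(y(\lambda-x))=0$ for every $y$; that is, $\rho(yx)=\lambda\rho(y)$ and $\rho(xy)=\lambda\rho(y)$. Second, whenever $M_{ij}=0$ we have $\rho(u_{ij})=\rho(u_{ij}^*u_{ij})=0$, so $\rho(yu_{ij})=\rho(u_{ij}y)=0$ for all $y$. Now suppose, for contradiction, that some entry satisfies $0<M_{kl}<1$. Consider the conditioned functional
\[
\rho_l(y)=\frac{\rho(u_{kl}\,y\,u_{kl})}{M_{kl}},
\]
which is a state because $\rho(u_{kl})=M_{kl}>0$. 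It satisfies $\rho_l(u_{kl})=1$, so its image $\rho_l(u)$ lies in $QP(A)=S(A)$ and has row $k$ equal to $e_l^T$. This image differs from $M$, since $M_{kl}<1$. The contradiction will follow once we show $\rho_l\in\mathcal F$, that is, $\rho(u_{kl}xu_{kl})=\lambda M_{kl}$. If this works, every entry of $M$ lies in $\{0,1\}$, so $M$ is a permutation matrix commuting with $A$ and hence lies in $P(A)$.

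The main obstacle is this last identity, because $x$ need not commute with $u_{kl}$. My first attempt is to pick the pure state $\omega\in\mathcal F$ given by a unit eigenvector $\xi$ of $\pi(x)$ with eigenvalue $\lambda$ in some representation $\pi$; such a vector exists because $\lambda=\max\sigma(x)$ is attained by a pure state. I will then try to show that $\pi(u_{kl})\xi$ is again a $\lambda$-eigenvector. To do this, I will express $xu_{kl}-u_{kl}x$ through the relations $uA=Au$ together with the vanishing of $\omega$ on all $u_{ij}$ with $M_{ij}=0$, following the support pattern of $M$. If that fails, the fallback is the following. Since $M$ is the unique maximizer, any two states in $\mathcal F$, and every convex combination of their conjugates, give the same matrix. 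I will exploit this with the expansion $\rho(y)=\sum_{l,l'}\rho(u_{kl}yu_{kl'})$ to show that $M=\sum_l M_{kl}\,\rho_l(u)$ modulo cross terms, and then prove that the cross terms vanish on the generators $u_{ij}$ by using the orthogonality $u_{ij}u_{ij'}=0$ for $j\neq j'$. Given that, $M$ would be a nontrivial convex combination of points of $S(A)$ with different $k$-th rows, contradicting that $M$ is a vertex.
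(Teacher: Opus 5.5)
Your setup is correct: the face $\mathcal F$, the two Cauchy--Schwarz consequences, the conditioned state $\rho_l$ with $k$-th row $e_l^{T}$, and the contradiction once $\rho_l\in\mathcal F$ are all fine. There is a genuine gap, though. The identity $\rho(u_{kl}xu_{kl})=\lambda M_{kl}$ that you leave open carries the entire content of the theorem, and neither route establishes it.

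Uniqueness of the maximizer only fixes the values of states in $\mathcal F$ on the generators. If $E$ is the $\lambda$-eigenspace of $\pi(x)$ in any representation $\pi$, uniqueness says $P_E\pi(u_{ij})P_E=M_{ij}P_E$. That is silent on whether $\pi(u_{kl})E\subseteq E$, and this invariance is precisely what $\rho_l\in\mathcal F$ requires.

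Your first route gives no reason why $[x,u_{kl}]\xi=0$ should hold for an arbitrary exposing $C$. The fallback relies on a mechanism that does not work: row and column orthogonality kills $\rho(u_{kl}u_{ij}u_{kl'})$ with $l\neq l'$ only when $i=k$ or $j\in\{l,l'\}$. For instance, take the magic unitary with diagonal blocks $\left(\begin{smallmatrix} p & 1-p\\ 1-p & p\end{smallmatrix}\right)$ and $\left(\begin{smallmatrix} q & 1-q\\ 1-q & q\end{smallmatrix}\right)$. It commutes with $A(K_4)=J-I$ and has $u_{11}u_{33}u_{12}=pq(1-p)$. Let $p,q$ be the rank-one projections onto $e_1$ and $\xi=(e_1+e_2)/\sqrt2$. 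Then the vector state of $\xi$ gives $\rho(u_{33})-\sum_l\rho(u_{1l}u_{33}u_{1l})=1/2$. So any vanishing of cross terms would have to come from the special nature of states in $\mathcal F$, which is the open identity again.

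The missing idea is to exploit that $M$ is a vertex of the specific polytope $S(A)$, all of whose defining equalities are state values of projections. The paper does this through the incompatibility graph $\widehat X$, where $xa\sim yb$ iff $u_{xa}u_{yb}=0$:
\begin{itemize}
\item the sets $R_x$, $C_a$, $D_{ij}$ are cliques (for $D_{ij}$ because $u_{xa}u_{yb}=0$ whenever $A_{xy}\neq A_{ab}$);
\item $S(A)$ is cut out by $f\geq 0$ and $f(R_x)=f(C_a)=f(D_{ij})=1$;
\item a Gram matrix argument gives $QP(A)\subseteq\texttt{TH}(\widehat X)$;
\item the linearly independent active constraints at a fractional vertex produce $B_0-\varepsilon e_pe_p^{T}\in\mathcal M(\widehat X)$, which violates a theta-body inequality.
\end{itemize}

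Phrased in your setting, take any state $\rho$ with $\rho(u)=M$ and GNS vector $\xi$. Each clique $C$ gives a projection $p_C=\sum_{v\in C}u_v$ with $\rho(p_C)=1$, hence $p_C\xi=\xi$. Each active zero coordinate $v$ gives $u_v\xi=0$. The vectors $\eta_v=u_v\xi-M_v\xi$ therefore satisfy $\sum_{v\in C}\eta_v=0$ for each active clique, and $\eta_v=0$ at the active zeros. The characteristic vectors of the active cliques together with these $e_v$ span $\mathbb R^{V\times V}$, so the linear map $c\mapsto\sum_v c_v\eta_v$ vanishes identically and every $\eta_v=0$. Hence $M_v=\|u_v\xi\|^2=\|M_v\xi\|^2=M_v^2$. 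This needs no exposing functional or conditioning, and it is exactly what would make your identity trivial, since then $u_{kl}\xi=M_{kl}\xi$.
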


In Section~\ref{secSymmetries} we relate these objects to classical and quantum symmetries. In particular, we discuss vertex-transitivity, generous transitivity, and distance-regularity from the point of view of the equalities $\orbAlg=\polAlg$ and $\qOrbAlg=\polAlg$. At the end of the paper we collect some open questions.

\section{Quantum fractional automorphisms} \label{secQuantumfrac}

Recall the definition of a quantum automorphism group from Definition~\ref{defQuantumAutomorphismGroup} and of the set of quantum fractional automorphisms from Definition~\ref{defqfa}. The next lemma follows from the GNS construction, see for example \cite[Theorem 5.1.1]{murphy2014c}.

\begin{lemma} \label{GNS}
    Let $X$ be a graph and $u$ be the fundamental representation of $\qGroup(X)$. Then for any $\tau$ unital positive linear functional on $C(\qGroup(X))$, there is a (possibly infinite dimensional) Hilbert space $\mathcal{H}$, a representation $\pi:C(\qGroup(X))\to\mathcal B(\mathcal H)$, and a unit vector $v_\tau$ so that 
    \[M=(M_{ij}) = (\tau(u_{ij})) = (\langle \pi(u_{ij}) v_\tau, v_\tau \rangle ) \in QP(A). \]
    The vector $v_\tau$ is called a cyclic vector.
\end{lemma}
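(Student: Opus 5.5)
The plan is to apply the GNS construction directly, as in \cite[Theorem 5.1.1]{murphy2014c}, to the state $\tau$ on the unital $C^*$-algebra $C(\qGroup(X))$. Concretely, on $C(\qGroup(X))$ we will define the sesquilinear form $\langle a,b\rangle_\tau=\tau(b^*a)$. This form is positive semidefinite because $\tau$ is positive. Let $N_\tau=\{a:\tau(a^*a)=0\}$. By the Cauchy--Schwarz inequality for positive functionals, $N_\tau$ is a closed left ideal. We will take $\mathcal H$ to be the Hilbert space completion of the quotient $C(\qGroup(X))/N_\tau$ with the induced inner product. This space may be infinite dimensional, which is why the statement allows it.

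Next we will define the representation by left multiplication, $\pi(a)(b+N_\tau)=ab+N_\tau$. It is well defined because $N_\tau$ is a left ideal. It is bounded with $\|\pi(a)\|\le\|a\|$, using the inequality $b^*a^*ab\le\|a\|^2b^*b$ together with positivity of $\tau$, so it extends to all of $\mathcal H$. One checks in a routine way that $\pi$ is a $*$-homomorphism into $\mathcal B(\mathcal H)$. We will set $v_\tau=1+N_\tau$. This vector is cyclic by construction, since $\pi(C(\qGroup(X)))v_\tau$ is the dense image of the algebra. It is a unit vector because $\|v_\tau\|^2=\tau(1^*1)=\tau(1)=1$, using that $\tau$ is unital.

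Finally, for each $i,j$ we compute
\[\langle \pi(u_{ij})v_\tau,v_\tau\rangle=\langle u_{ij}+N_\tau,1+N_\tau\rangle=\tau(1^*u_{ij})=\tau(u_{ij}).\]
Hence $M_{ij}=\tau(u_{ij})=\langle \pi(u_{ij})v_\tau,v_\tau\rangle$. Since $\tau$ is a state, $M\in QP(A)$ holds directly by Definition~\ref{defqfa}.

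The only step requiring care is the boundedness of $\pi(a)$, which is needed to extend it from the pre-Hilbert space to its completion. We will handle it with the standard $C^*$ inequality quoted above, or simply cite the GNS theorem as stated in \cite{murphy2014c}. Everything else is a direct verification.
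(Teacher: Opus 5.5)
Your proposal is correct and follows the paper's approach. The paper simply invokes the GNS construction (\cite[Theorem 5.1.1]{murphy2014c}); you carry that construction out explicitly, compute $\langle \pi(u_{ij})v_\tau,v_\tau\rangle=\tau(u_{ij})$, and obtain $M\in QP(A)$ directly from Definition~\ref{defqfa}.
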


If we extend our view to all possible states from the quantum automorphism group, then we can say even more about it. The proof is similar to \cite[Theorem 3.4]{fritz2012tsirelson}.

\begin{lemma}
\label{closedQuantumCompactSet}
    Let $X$ be a graph, $\qGroup(X)$ be its quantum automorphism group and $u$ its fundamental representation. Then the set
    \[QP(A) = \{ M : M_{ij} = \rho(u_{ij}) \text{ for some } \rho \text { state on } C(\qGroup (X))\}\]
    is closed and convex.
\end{lemma}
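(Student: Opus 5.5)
Convexity and closedness both come from viewing $QP(A)$ as the image of the state space $\mathcal S$ of $C(\qGroup(X))$ under the evaluation map
\[
\Phi:\mathcal S\to M_n(\mathbb R),\qquad \Phi(\rho)=(\rho(u_{ij}))_{i,j}.
\]
First, $\Phi$ does land in real matrices: each $u_{ij}$ is a projection, so $\rho(u_{ij})\in[0,1]$ for every state $\rho$. Thus $QP(A)=\Phi(\mathcal S)$ by Definition~\ref{defqfa}, and the two properties can be checked separately.

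For convexity, take $M=\Phi(\rho)$ and $N=\Phi(\sigma)$, and let $t\in[0,1]$. The functional $t\rho+(1-t)\sigma$ is positive and unital, since both conditions are preserved under convex combinations. So it is a state. By linearity of evaluation,
\[
\Phi(t\rho+(1-t)\sigma)=tM+(1-t)N,
\]
hence $tM+(1-t)N\in QP(A)$. This is the easy half and needs no representation theory. Lemma~\ref{GNS} is not required here either, although one could also phrase it through a direct sum of GNS representations with the vector $\sqrt t\,v_\rho\oplus\sqrt{1-t}\,v_\sigma$.

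For closedness, I follow the compactness argument of \cite[Theorem 3.4]{fritz2012tsirelson}.
\begin{enumerate}[(i)]
\item The state space $\mathcal S$ of a unital $C^*$-algebra is a weak$^*$-closed subset of the unit ball of the dual, because states have norm $1$. By Banach--Alaoglu it is therefore weak$^*$-compact.
\item The map $\Phi$ is weak$^*$-continuous. Each coordinate $\rho\mapsto\rho(u_{ij})$ is evaluation at a fixed element, which is continuous by the definition of the weak$^*$ topology, and there are only finitely many coordinates.
\item Hence $\Phi(\mathcal S)$ is compact in the finite-dimensional space $M_n(\mathbb R)$, and in particular closed. As a by-product it is bounded, which is consistent with its entries lying in $[0,1]$.
\end{enumerate}

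An equivalent sequential route avoids quoting continuity of the image map. Take $M^{(k)}=\Phi(\rho_k)\to M$. By compactness of $\mathcal S$, some subnet $\rho_{k_\alpha}$ converges weak$^*$ to a state $\rho$, and then $\rho(u_{ij})=\lim_\alpha \rho_{k_\alpha}(u_{ij})=M_{ij}$. Note that a subnet is needed rather than a subsequence, since $C(\qGroup(X))$ need not be separable a priori.

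The only step needing care is confirming that the weak$^*$ limit of states is again a state. This is routine: positivity and $\rho(1)=1$ are preserved under pointwise limits. The existence of the universal $C^*$-algebra $C(\qGroup(X))$ is implicit in Definition~\ref{defQuantumAutomorphismGroup}, so nothing further is required.
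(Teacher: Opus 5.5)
Your proposal is correct and follows essentially the same route as the paper. In both, $QP(A)$ is the image of the weak$^*$-compact convex state space under the affine, weak$^*$-continuous evaluation map $\rho\mapsto(\rho(u_{ij}))_{i,j}$, so it is compact (hence closed) and convex.

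One small aside on your sequential variant: the algebra is generated by the finitely many $u_{ij}$ and is therefore separable. So subsequences would in fact suffice there, though using subnets remains valid.
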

\begin{proof}
This proof follows from standard theory of $C^*$-algebras and their state spaces; see \cite[Chapter VIII]{conway2019course}.
    
Let $\mathcal S(C(\qGroup(X)))$ denote the state space of $C(\qGroup(X))$. This state space is weak-$*$ compact and convex. Define
\[
    \Phi:\mathcal S(C(\qGroup(X)))\longrightarrow M_n(\mathbb C),
    \qquad
    \Phi(\rho)=(\rho(u_{ij}))_{i,j}.
\]
For each $i,j$, the coordinate map $\rho\mapsto \rho(u_{ij})$ is weak-$*$ continuous, and hence $\Phi$ is continuous. It is also affine. Therefore $QP(A)=\Phi(\mathcal S(C(\qGroup(X))))$ is compact, hence closed, and convex.
\end{proof}

The following theorem will justify our early adoption of the notation $QP(A)$.

\begin{theorem}
\label{thm:CommutantQuantumOrbital}
    Let $X$ be a graph, $u$ be the fundamental representation of its quantum automorphism group and $QP(A)$ the set of quantum fractional automorphisms. Let \(M\in M_V(\mathbb R)\). Then the following are equivalent:
    \begin{itemize}
        \item $M \in \qOrbAlg$.
        \item $[M, u] = 0$.
        \item $M \in QP(A)'$.
    \end{itemize}
    As a consequence, $QP(A)$ is a closed convex set so that 
    \[P(A) \subseteq QP(A) \subseteq \ds(\qOrbAlg') \subseteq S(A).\]
\end{theorem}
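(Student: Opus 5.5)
The equivalence of the first two items is the characterization of the quantum orbital algebra recalled in Section~\ref{secCommutants}, following \cite[Section 3]{LUPINI2020108592}: $\qOrbAlg$ is exactly the set of real matrices $M$ with $Mu=uM$. So I will quote that directly. The actual work is the equivalence $[M,u]=0 \iff M\in QP(A)'$.

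\emph{Forward direction.} Suppose $Mu=uM$. This means $\sum_k M_{ik}u_{kj}=\sum_k u_{ik}M_{kj}$ for all $i,j$, an identity in $C(\qGroup(X))$ with scalar coefficients $M_{ik}$. Apply any state $\rho$ and use linearity. This gives $MN=NM$ for $N=(\rho(u_{ij}))$, so $M$ commutes with every element of $QP(A)$.

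\emph{Converse.} Suppose $M$ commutes with every $N\in QP(A)$. Set $c_{ij}=\sum_k M_{ik}u_{kj}-\sum_k u_{ik}M_{kj}\in C(\qGroup(X))$. Since $M$ is real and the $u_{ij}$ are self-adjoint, each $c_{ij}$ is self-adjoint. The hypothesis says $\rho(c_{ij})=0$ for every state $\rho$. A self-adjoint element annihilated by all states is zero, because $\|a\|=\sup_\rho|\rho(a)|$ for self-adjoint $a$. Hence $[M,u]=0$.

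This step is where care is needed. One has to note that the scalars $M_{ik}$ are real, so that $c_{ij}$ is self-adjoint. Alternatively, for a complex $M$ one can split into real and imaginary parts, or use that states separate points of a $C^*$-algebra. Lemma~\ref{GNS} can also be invoked to realize states concretely, but it is not essential.

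\emph{The chain of inclusions.} Closedness and convexity come from Lemma~\ref{closedQuantumCompactSet}.
\begin{itemize}
\item $P(A)\subseteq QP(A)$: each automorphism $\sigma$ gives a character $u_{ij}\mapsto P_{ij}$, which respects the relations (i)--(iii) because $P$ is a permutation matrix commuting with $A$. Characters are states, so $P\in QP(A)$, and convexity of $QP(A)$ gives the convex hull.
\item $QP(A)\subseteq\ds(\qOrbAlg')$: elements of $QP(A)$ are doubly stochastic by the magic unitary relations. By the equivalence, every $M\in\qOrbAlg$ lies in $QP(A)'$, so $QP(A)\subseteq QP(A)''\subseteq\qOrbAlg'$.
\item $\ds(\qOrbAlg')\subseteq S(A)$: since $A\in\qOrbAlg$, any element of $\qOrbAlg'$ commutes with $A$.

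\end{itemize}
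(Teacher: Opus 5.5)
Your proposal is correct and follows the paper's proof essentially step for step. You quote Lupini--Man\v{c}inska--Roberson for the first equivalence, apply a state entrywise to $Mu=uM$ for the forward direction, and for the converse use that a nonzero element of the $C^*$-algebra is detected by some state; you phrase this via self-adjointness of $c_{ij}$ and $\|a\|=\sup_\rho|\rho(a)|$, while the paper simply cites that states separate points. Your treatment of the chain of inclusions spells out the paper's one-line justification (characters are states, entries of $QP(A)$ give doubly stochastic matrices, plus the equivalence) in more detail.
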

\begin{proof} 
The equivalence between the first two conditions is precisely \cite[Theorem 3.10]{LUPINI2020108592} with the only difference being that we are considering a real algebra, but the constraints are all linear, so the result follows: for a real matrix $M$, commuting with the magic unitary is equivalent to being constant on quantum orbitals, that is, to belonging to $\qOrbAlg$.

Assume now that $[M,u]=0$ and let $N=(\rho(u_{ij}))\in QP(A)$. Applying the state $\rho$ entrywise to the equality $Mu=uM$ gives $MN=NM$. Hence $M\in QP(A)'$.

Conversely, suppose that $M\in QP(A)'$ but $Mu\neq uM$. Then for some vertices $i,j$ the element
\[
    a_{ij}=\sum_{k\in V(X)} m_{ik}u_{kj}-\sum_{k\in V(X)}u_{ik}m_{kj}
\]
of $C(\qGroup(X))$ is nonzero. States separate points of a unital $C^*$-algebra (again, see \cite[Chapter VIII]{conway2019course}), so there exists a state $\rho$ such that $\rho(a_{ij})\neq 0$. But for $N=(\rho(u_{ij}))\in QP(A)$, the scalar $\rho(a_{ij})$ is exactly the $(i,j)$-entry of $MN-NM$, contradicting $M\in QP(A)'$. Therefore $Mu=uM$.

The final statement follows from Lemma~\ref{closedQuantumCompactSet}, the fact that every classical automorphism gives a one-dimensional representation of the quantum automorphism group, the fact that any matrix in $QP(A)$ is doubly stochastic, and the equivalence established above.
\end{proof}

\section{Various compactness notions} \label{secCompactness}

Recall from Section~\ref{secCommutants} the chain of inclusions
\[\orbAlg' \subseteq \qOrbAlg' \subseteq \smallAlg' \subseteq \polAlg',\]
and from Section~\ref{secDSCommuting} the relations
\[P(A) \subseteq \ds(\orbAlg') , \quad QP(A) \subseteq \ds(\qOrbAlg') , \quad S(A) = \ds(\polAlg')\]
when restricting the sets to doubly stochastic matrices. 

\subsection{Classical (pseudo) compactness}

Possibly as a warm-up but also introducing new concepts and results, we start by discussing the classical case. A graph is compact if $P(A) = S(A)$. We call a graph \textit{pseudo compact} if 
\[\ds(\orbAlg') = S(A).\]

\begin{lemma} \label{lemmaOrbitalPolytope}
    The set $\orbAlg'$ is linearly spanned by the matrices in $P(A)$. The set $\ds(\orbAlg')$ consists of all non-negative matrices which are affine combinations of matrices in $P(A)$.
\end{lemma}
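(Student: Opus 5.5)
The plan is to first identify $\orbAlg'$ with the span of the automorphism permutation matrices, and then read off the doubly stochastic part.

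For the first claim, let $G=\mathrm{Aut}(X)$ and let $\mathcal{P}=\mathrm{span}\{P_g : g\in G\}$ be the real linear span of the permutation matrices of automorphisms. This span coincides with the span of $P(A)$, since $P(A)$ is the convex hull of these matrices. The set $\mathcal{P}$ is closed under multiplication, because $P_gP_h=P_{gh}$, and under transposition, because $P_g^T=P_{g^{-1}}$. It also contains $I$, so it is a finite-dimensional semisimple $*$-algebra. By definition $\orbAlg$ is the algebra of matrices commuting with every $P_g$, that is, $\orbAlg=\mathcal{P}'$. Taking commutants and applying the double centralizer theorem recalled at the end of Section~\ref{secCommutants} gives $\orbAlg'=\mathcal{P}''=\mathcal{P}$. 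So $\orbAlg'$ is exactly the linear span of $P(A)$.

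For the second claim, one inclusion is easy. Suppose $M=\sum_g c_gP_g$ is nonnegative and $\sum_g c_g=1$. Then $M\mathbf{1}=\mathbf{1}=M^T\mathbf{1}$, so $M$ is doubly stochastic, and $M$ lies in $\orbAlg'$ by the first part.

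For the reverse inclusion, take $M\in\ds(\orbAlg')$. By the first part we can write $M=\sum_g c_gP_g$ with real coefficients. Applying this to $\mathbf{1}$ gives $\mathbf{1}=M\mathbf{1}=\big(\sum_g c_g\big)\mathbf{1}$, so $\sum_g c_g=1$. Thus $M$ is an affine combination of matrices in $P(A)$, and it is nonnegative because it is doubly stochastic. If one prefers affine combinations of arbitrary elements of $P(A)$ rather than of the vertices $P_g$, the two descriptions agree. Every element of $P(A)$ is a convex combination of the $P_g$, so an affine combination of elements of $P(A)$ expands to an affine combination of the $P_g$, with coefficients still summing to $1$.

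The only substantive step is the identification $\orbAlg'=\mathcal{P}$. This needs $\mathcal{P}$ to be a unital $*$-algebra, which holds because $G$ is a group, so that the bicommutant theorem applies. Everything after that is bookkeeping with the all-ones vector.
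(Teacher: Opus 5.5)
Your proposal is correct and follows the paper's proof. The paper first identifies the commutant of the orbital algebra with the linear span of the automorphism permutation matrices, then uses row and column sums to show that the coefficients of a doubly stochastic element sum to one, and conversely. The only difference is that the paper simply cites Godsil and Meagher for that identification, whereas you derive it from the double centralizer theorem after checking that the span of the $P_g$ is a unital algebra closed under transposition, which makes the argument self-contained.
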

\begin{proof}
    Since $\orbAlg$ is generated by the orbitals of the automorphism group of $X$, then $\orbAlg'$ is generated by the permutation matrices corresponding to the automorphisms of $X$; see e.g. \cite[Theorem 13.2.1]{ErdosKoRado}. 
    
    Let $M \in \ds(\orbAlg')$. Therefore, $M$ can be written as a linear combination of these permutation matrices. Moreover, since $M$ is doubly stochastic, then $M$ is non-negative and the coefficients of this linear combination must sum to one.

    Conversely, let $M$ be a non-negative matrix which is an affine combination of matrices in $P(A)$. Since $P(A)$ is the convex hull of permutation matrices that commute with $A$, then $M$ can be written as a linear combination of these permutation matrices with coefficients that sum to one. As $M$ is non-negative, $M$ is doubly stochastic and so $M \in \ds(\orbAlg')$.
\end{proof}

When the graph is regular, the condition $\ds(\orbAlg') = S(A)$ is equivalent to the condition that $\orbAlg = \polAlg$.

\begin{lemma}\label{lemmaPseudCompactAlgebra}
    If $\orbAlg = \polAlg$, then $X$ is pseudo compact. If $X$ is regular and pseudo compact, then $\orbAlg = \polAlg$.
\end{lemma}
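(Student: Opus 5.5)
The first implication is immediate from the double centralizer theorem. If $\orbAlg = \polAlg$, then taking commutants gives $\orbAlg' = \polAlg'$. Applying $\ds$ to both sides and using $\ds(\polAlg') = S(A)$ from \eqref{equationClassicalInclusions}, we get $\ds(\orbAlg') = S(A)$, which is pseudo compactness.

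For the converse, assume $X$ is regular of valency $k$ and $\ds(\orbAlg') = S(A)$. The plan is to show $\polAlg' \subseteq \orbAlg'$. The reverse inclusion always holds, so the two commutants are then equal, and the bicommutant theorem gives $\orbAlg = \polAlg$. The key step is to show that the real linear span of $S(A)$ is all of $\polAlg'$. Lemma~\ref{lemmaOrbitalPolytope} shows that $\orbAlg'$ is a linear space, and by assumption it contains $S(A)$. So this step would give $\polAlg' = \operatorname{span} S(A) \subseteq \orbAlg'$.

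To prove the spanning claim, I will use regularity to find an interior point. Since $X$ is regular, $AJ = JA = kJ$, so $\frac1n J$ commutes with $A$ and lies in $S(A)$. Now take any $N \in \polAlg'$. Because $N$ commutes with $J$, its row sums and column sums all equal a common constant $c$. Indeed, $NJ = JN$ forces $N\mathbf 1 = c\mathbf 1$ and $\mathbf 1^T N = c\mathbf 1^T$ for the same $c$. For small $\varepsilon>0$, set
\[
M = \tfrac1n J + \varepsilon\bigl(N - \tfrac{c}{n}J\bigr).
\]
This matrix commutes with $A$ and has all row and column sums equal to $1$. Its entries are positive because those of $\frac1n J$ are strictly positive, so $M \in S(A)$. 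Solving for $N$ shows that it is a linear combination of $M$ and $\frac1n J$, both of which lie in $S(A)$.

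The main obstacle, though a mild one, is the role of regularity. Without it, $J$ need not commute with $A$, so the barycenter $\frac1n J$ is unavailable and $S(A)$ may span only a proper subspace of $\polAlg'$. For regular graphs, the positivity of $\frac1n J$ is exactly what makes the perturbation argument work.
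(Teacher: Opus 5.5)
Your proof is correct and follows essentially the same argument as the paper. You perturb the interior point $\frac1n J\in S(A)$ by $\varepsilon$ times an element of $\polAlg'$ normalized to have zero line sums, deduce $\polAlg'\subseteq\orbAlg'$, and conclude with the double centralizer theorem. (You normalize by subtracting $\frac{c}{n}J$ where the paper subtracts $\lambda I$, which makes no difference; and the first implication needs no double centralizer argument, since equal algebras trivially have equal commutants.)
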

\begin{proof}
    The first statement is immediate, since if $\orbAlg = \polAlg$, then it holds $\ds(\orbAlg') = \ds(\polAlg') = S(A)$, so $X$ is pseudo compact.

    For the second one, suppose that $X$ is regular and pseudo compact, so $\ds(\orbAlg') = \ds(\polAlg') = S(A)$. Since $X$ is regular, $(1/n)J \in S(A)$. Let $M \in \polAlg'$. Since $J \in \polAlg$, the matrix $M$ commutes with $J$, and therefore there is a scalar $\lambda$ such that $N=M-\lambda I$ has all row and column sums equal to zero. For sufficiently small $\varepsilon>0$, the matrix
    \[
        \frac{1}{n}J+\varepsilon N
    \]
    is doubly stochastic and commutes with $A$. Thus it belongs to $S(A)=\ds(\orbAlg')$. Since $(1/n)J\in S(A)$, it follows that $N\in\orbAlg'$, and consequently $M\in\orbAlg'$. Therefore $\polAlg'\subseteq\orbAlg'$. The reverse inclusion follows from $\polAlg\subseteq\orbAlg$, and hence $\orbAlg'=\polAlg'$. By the double centralizer theorem, $\orbAlg=\polAlg$.
\end{proof}

The hypothesis of regularity is necessary for the nontrivial direction of the above result. The graph $K_{2,3}$ is pseudo compact, but a direct computation gives
$\dim(\orbAlg)=6$ and $\dim(\polAlg)=5$, so $\orbAlg \neq \polAlg$. It follows from this that the orbital algebra of a regular pseudo compact graph is a symmetric association scheme.

\begin{corollary}\label{petersenpc}
    The Petersen graph is pseudo compact but is not compact.
\end{corollary}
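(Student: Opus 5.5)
The corollary has two parts, and I will handle them separately: pseudo compactness through Lemma~\ref{lemmaPseudCompactAlgebra}, and non-compactness by exhibiting an explicit matrix in $S(A)\setminus P(A)$.

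\emph{Pseudo compactness.} The Petersen graph is $3$-regular, so by Lemma~\ref{lemmaPseudCompactAlgebra} it suffices to show $\orbAlg=\polAlg$. The Petersen graph is connected, strongly regular and has three distinct eigenvalues ($3,1,-2$). By the remark in Section~\ref{secCommutants}, $\dim\polAlg=3$, with $\polAlg$ spanned by $I$, $A$ and $J-I-A$. Its automorphism group $S_5$, acting on $2$-subsets of $\{1,\dots,5\}$, has exactly three orbitals on ordered pairs of vertices: equal, disjoint (adjacent), and meeting in one point (non-adjacent). Hence $\dim\orbAlg=3$. Since $\polAlg\subseteq\orbAlg$ and the dimensions agree, the two algebras are equal, and the graph is pseudo compact.

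\emph{Not compact.} I will use the witness mentioned in the introduction, namely $M=\tfrac13 A$. It is doubly stochastic and commutes with $A$, so $M\in S(A)$. I will show $M\notin P(A)$. Suppose instead that $M=\sum_\sigma \lambda_\sigma P_\sigma$ is a convex combination of automorphism permutation matrices. Since $M$ has zero diagonal, every $P_\sigma$ with $\lambda_\sigma>0$ must also have zero diagonal, so each such $\sigma$ is fixed-point-free. Since $M$ is supported on the edges, each such $\sigma$ must map every vertex to a neighbour. The key step is then to show that no automorphism of the Petersen graph sends every vertex to an adjacent vertex. In the $S_5$ model, this means no permutation $\pi\in S_5$ maps every $2$-subset to a disjoint $2$-subset. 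Take any $\pi$, and choose a pair $\{a,b\}$ where $b=\pi(a)$ (or $b$ arbitrary if $a$ is fixed). Then $\pi(\{a,b\})$ contains $b$, so it meets $\{a,b\}$. This is a contradiction, so the support condition already rules out every term, and $M\notin P(A)$.

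The step I would check most carefully is the identification $\operatorname{Aut}=S_5$ with exactly three orbitals. However, this is standard, and the non-compactness argument is short once the support observation is made.
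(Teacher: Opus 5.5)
Your proof is correct. The pseudo-compactness half is the paper's argument: the paper notes that the Petersen graph is distance-transitive, so its orbitals are the distance relations and the orbital algebra equals the algebra generated by $A$ and $J$. Your rank-$3$ count for $S_5$ acting on $2$-subsets, compared with $\dim = 3$ on the adjacency side, is the same fact phrased as a dimension comparison.

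For non-compactness you use the same witness $\tfrac13 A$ as the paper, but the routes differ. The paper simply cites Evdokimov et al.\ for $\tfrac13 A\notin P(A)$, whereas you prove it. By nonnegativity, any convex (indeed any nonnegative) combination of permutation matrices equal to $\tfrac13 A$ can only use permutations whose support lies inside the edge set. Each automorphism used would therefore send every vertex to a neighbour, and no $\pi\in S_5$ sends every $2$-subset to a disjoint one.

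This makes the corollary self-contained and explains why the witness works: $A$ is not even in the cone spanned by the automorphism matrices. The cost is relying on the standard fact that the automorphism group of the Petersen graph is exactly $S_5$.

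One small slip: when $\pi$ fixes $a$, the image $\pi(\{a,b\})=\{a,\pi(b)\}$ contains $a$, not necessarily $b$. The conclusion that it meets $\{a,b\}$ is unaffected.
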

\begin{proof}
    The Petersen graph is connected and regular, and is distance-transitive, so $\orbAlg = \polAlg$. However, it is known that the Petersen graph is not compact \cite{EVDOKIMOV1999247}: its normalized adjacency matrix lies in $S(A)$ but not in $P(A)$.
\end{proof}

\subsection{Quantum (pseudo) compactness}

We now define a graph to be \textit{quantum compact} if $QP(A) = S(A)$, and \textit{pseudo quantum compact} if $\ds(\qOrbAlg') = S(A)$.

We start with analogous results to the classical case.

\begin{lemma}
    The set $\qOrbAlg'$ is linearly spanned by the matrices in $QP(A)$. The set $\ds(\qOrbAlg')$ consists of all non-negative matrices which are affine combinations of matrices in $QP(A)$. 
\end{lemma}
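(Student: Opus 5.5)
The plan is to mirror the proof of Lemma~\ref{lemmaOrbitalPolytope}, replacing the classical bicommutant fact with Theorem~\ref{thm:CommutantQuantumOrbital} together with the double centralizer theorem.

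For the first statement, let $W$ denote the real linear span of $QP(A)$; it is a subspace of $M_V(\mathbb R)$. First I will check that $W$ is closed under multiplication. One route is to show that the product of two elements of $QP(A)$ lies in $QP(A)$, using the coproduct $\Delta(u_{ij})=\sum_k u_{ik}\otimes u_{kj}$: the convolution $(\rho_1\otimes\rho_2)\circ\Delta$ of states $\rho_1,\rho_2$ is again a state, and it evaluates to $N_1N_2$. The identity lies in $QP(A)$ (via the counit, i.e.\ the trivial automorphism). For $*$-closure, transposes are realized by composing a state with the antipode $S(u_{ij})=u_{ji}$, which is a $*$-antiautomorphism on the universal algebra, so $\rho\circ S$ is again a state.

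With these facts, $W$ is a unital real $*$-subalgebra of $M_V(\mathbb R)$. By Theorem~\ref{thm:CommutantQuantumOrbital} we have $QP(A)'=\qOrbAlg$, and hence $W'=\qOrbAlg$. The double centralizer theorem then gives $W=W''=\qOrbAlg'$. I expect the main obstacle to be justifying the use of the coproduct and antipode on $C(\qGroup(X))$, in particular verifying that $\rho\circ S$ is positive; this uses that $S$ is a $*$-antihomomorphism for the universal version, which is standard for quantum permutation groups. If I want to avoid $*$-closure, I can instead note that $W$ is a unital subalgebra whose commutant $\qOrbAlg$ is self-adjoint; but a bicommutant argument for non-self-adjoint algebras fails in general, so I will use the antipode.

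For the second statement, take $M\in\ds(\qOrbAlg')$. By the first part, $M=\sum_t c_tN_t$ with $N_t\in QP(A)$ and real coefficients $c_t$. Since every $N_t$ is doubly stochastic, we have $M\mathbf 1=(\sum_t c_t)\mathbf 1$, and $M\mathbf 1=\mathbf 1$, so $\sum_t c_t=1$. Thus $M$ is a nonnegative affine combination of elements of $QP(A)$. Conversely, a nonnegative affine combination of elements of $QP(A)$ lies in $\qOrbAlg'$ by the first part, and it has all row and column sums equal to $1$. Therefore it is doubly stochastic and lies in $\ds(\qOrbAlg')$.
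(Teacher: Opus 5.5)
Your proposal is correct and follows essentially the same route as the paper. It shows that $QP(A)$ is closed under products via the convolution state $(\rho\otimes\sigma)\circ\Delta$ and closed under transposes, uses $QP(A)'=\qOrbAlg$ from Theorem~\ref{thm:CommutantQuantumOrbital} together with the double centralizer theorem, and finishes with the same row-sum argument for the affine claim. Your antipode argument ($\rho\circ S$ is positive because $S(u_{ij})=u_{ji}$ is a $*$-antihomomorphism) supplies the justification behind the paper's bare remark that $\rho'(u_{ij})=\rho(u_{ji})$ defines a state, and you also prove explicitly the converse inclusion that the paper only describes as ``analogous to the classical case''.
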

\begin{proof}
Let $\mathcal A$ be the unital algebra generated by $QP(A)$. Note that it is closed under the adjoint operation, because for any state $\rho$ the map defined by $\rho' (u_{ij}) = \rho(u_{ji})$ is a state, so it is a $*$-algebra. By the previous Theorem \ref{thm:CommutantQuantumOrbital}, $QP(A)'=\qOrbAlg$. Hence, by the finite-dimensional double centralizer theorem,
\[
    \mathcal A = QP(A)'' = \qOrbAlg'.
\]
Equivalently, $\qOrbAlg'$ is generated as a unital algebra by the matrices in $QP(A)$.

Moreover, the product of two matrices in $QP(A)$ again lies in $QP(A)$. Indeed, if $M^\rho=(\rho(u_{ij}))$ and $M^\sigma=(\sigma(u_{ij}))$, then the convolution state $(\rho\otimes\sigma)\circ \Delta$ satisfies
\[
    ((\rho\otimes\sigma)\circ\Delta)(u_{ij})
    =\sum_k \rho(u_{ik})\sigma(u_{kj}),
\]
which is the $(i,j)$-entry of $M^\rho M^\sigma$. Thus the algebra generated by $QP(A)$ is its linear span.

The second claim follows analogously to the classical case. Any $M \in \ds(\qOrbAlg')$ is non-negative and a linear combination of matrices in $QP(A)$, which are doubly stochastic. So the linear combination must be affine because the rows and columns of $M$ sum to $1$.
\end{proof}

The proof of the following lemma is the same as for the classical case, but using the quantum orbital algebra instead of the orbital algebra.

\begin{lemma}
\label{lemmaPseudQuantCompOrbitAlgeb}
If $\qOrbAlg = \polAlg$, then $X$ is pseudo quantum compact. If $X$ is regular, and if it is pseudo quantum compact, then $\qOrbAlg = \polAlg$.
\end{lemma}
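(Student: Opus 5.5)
The plan is to copy the proof of Lemma~\ref{lemmaPseudCompactAlgebra}, replacing $\orbAlg$ by $\qOrbAlg$ and the permutation-matrix facts by Theorem~\ref{thm:CommutantQuantumOrbital}. The only structural inputs are the inclusion $\polAlg \subseteq \qOrbAlg$ (so $\qOrbAlg' \subseteq \polAlg'$) and the finite-dimensional double centralizer theorem for $\qOrbAlg$ and $\polAlg$.

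For the first statement, assume $\qOrbAlg = \polAlg$. Taking commutants gives $\qOrbAlg' = \polAlg'$. Applying $\ds$ then yields $\ds(\qOrbAlg') = \ds(\polAlg') = S(A)$, which is exactly pseudo quantum compactness.

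For the second statement, assume $X$ is regular and $\ds(\qOrbAlg') = S(A)$.
\begin{enumerate}[(1)]
\item Regularity gives $AJ = JA$, so $\frac1n J \in S(A) = \ds(\qOrbAlg')$.
\item Fix $M \in \polAlg'$. Since $J \in \polAlg$, $M$ commutes with $J$, so all row and column sums of $M$ equal a common scalar $\lambda$. Hence $N = M - \lambda I$ has zero row and column sums and still commutes with $A$.
\item For small $\varepsilon > 0$, the matrix $\frac1n J + \varepsilon N$ is entrywise non-negative, has unit line sums, and commutes with $A$. So it lies in $S(A) = \ds(\qOrbAlg')$.
\item $\qOrbAlg'$ is a linear subspace containing both $\frac1n J$ and $\frac1n J + \varepsilon N$, so it contains $N$. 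It also contains $I$, so $M = N + \lambda I \in \qOrbAlg'$.
\end{enumerate}
This gives $\polAlg' \subseteq \qOrbAlg'$. Combined with the reverse inclusion, $\qOrbAlg' = \polAlg'$, and the double centralizer theorem gives $\qOrbAlg = \polAlg$.

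The only point needing care is that the double centralizer argument uses real algebras, while $\qOrbAlg$ was defined in Section~\ref{secCommutants} as a real coherent algebra. I would handle this as the paper already does: both $\qOrbAlg$ and $\polAlg$ are self-adjoint, closed under transpose, and contain $I$, so they are finite-dimensional semisimple $*$-algebras. The bicommutant identity therefore applies, exactly as stated at the end of Section~\ref{secCommutants}. No quantum-group input is needed beyond the fact that $\qOrbAlg$ is such an algebra containing $A$ and $J$.
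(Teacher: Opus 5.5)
Your proof is correct and follows essentially the paper's route: the paper proves this lemma by repeating the argument of Lemma~\ref{lemmaPseudCompactAlgebra} with $\qOrbAlg$ in place of $\orbAlg$. That argument perturbs $\frac1n J$ by $\varepsilon(M-\lambda I)$ to get $\polAlg'\subseteq\qOrbAlg'$ and then applies the double centralizer theorem. As your final paragraph notes, Theorem~\ref{thm:CommutantQuantumOrbital} is never actually used; the argument needs only $\polAlg\subseteq\qOrbAlg$ and the bicommutant property.
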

\begin{proof}
    Proof is identical to the proof of Lemma~\ref{lemmaPseudCompactAlgebra}, but using $\qOrbAlg$ instead of $\orbAlg$ and $QP(A)$ instead of $P(A)$.
\end{proof}

Analagous to the classical case, the quantum orbital algebra of a regular pseudo quantum compact graph is a symmetric association scheme. We look again at the Petersen graph and obtain the following separation. 

\begin{corollary}
    The Petersen graph is pseudo quantum compact but it is not quantum compact. 
\end{corollary}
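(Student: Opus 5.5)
Two claims need proof: pseudo quantum compactness and failure of quantum compactness.

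For pseudo quantum compactness we use the sandwich $\polAlg \subseteq \qOrbAlg \subseteq \orbAlg$. By Corollary~\ref{petersenpc} and its proof, the Petersen graph is distance-transitive, so $\orbAlg=\polAlg$. Both outer algebras then coincide, which forces $\qOrbAlg=\polAlg$. The first part of Lemma~\ref{lemmaPseudQuantCompOrbitAlgeb} now gives $\ds(\qOrbAlg')=S(A)$.

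For the failure of quantum compactness, the plan is to show $QP(A)=P(A)$ for the Petersen graph. Since Corollary~\ref{petersenpc} gives $P(A)\neq S(A)$, this yields $QP(A)\neq S(A)$. The key input is the known theorem (Schmidt) that the Petersen graph has no quantum symmetry: $C(\qGroup(X))$ is commutative, so $C(\qGroup(X))\cong C(\mathrm{Aut}(X))$, the algebra of functions on the finite group $\mathrm{Aut}(X)\cong S_5$. Under this identification $u_{ij}$ becomes the indicator function of $\{\sigma:\sigma(i)=j\}$ (or its transpose convention).

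A state on $C(G)$ for a finite group $G$ is given by a probability measure $\mu$ on $G$. Then
\[
\rho(u_{ij})=\sum_{\sigma}\mu(\sigma)\,[\sigma(i)=j],
\]
which is exactly the $(i,j)$ entry of $\sum_\sigma \mu(\sigma)P_\sigma$. Hence $QP(A)$ is the convex hull of the automorphism permutation matrices, that is, $P(A)$. In particular the normalized adjacency matrix $\frac13 A$, which lies in $S(A)\setminus P(A)$, is not in $QP(A)$.

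The main obstacle is that we rely on an external theorem, the absence of quantum symmetry for the Petersen graph, which is not proved in this paper. I would cite it rather than reprove it. If a self-contained argument were wanted, a direct alternative would be to show $\frac13 A\notin QP(A)$: assume $\rho(u_{ij})=\frac13 A_{ij}$ and use magic unitary relations together with orthogonality of $u_{ij}u_{kl}$ for nonadjacent/adjacent mismatches to reach a contradiction. That route seems considerably harder, so the citation is the chosen route.
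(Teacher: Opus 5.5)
Your proposal is correct and follows the paper's argument. The non-quantum-compactness half is the same: no quantum symmetry gives $QP(A)=P(A)$, and Corollary~\ref{petersenpc} gives $P(A)\neq S(A)$. For pseudo quantum compactness the paper again uses the absence of quantum symmetry to get $\ds(\qOrbAlg')=\ds(\orbAlg')=S(A)$. Your sandwich $\polAlg\subseteq\qOrbAlg\subseteq\orbAlg=\polAlg$ avoids that external theorem for this half, and in fact shows that every distance-transitive graph is pseudo quantum compact.
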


\begin{proof}
We use the fact that the Petersen graph $P$ has no quantum symmetry \cite{schmidt2018petersen}, i.e. the algebra $C(\qGroup(P))$ is commutative. This implies $QP(A)=P(A)$ and $\ds(\qOrbAlg')=\ds(\orbAlg')$ and therefore the statement follows from Corollary \ref{petersenpc}.
\end{proof}

We have seen that the Petersen graph is both pseudo compact and pseudo quantum compact. We will now discuss an example that is pseudo quantum compact, but not pseudo compact. 

\subsection{A non-trivial example of pseudo quantum compactness} \label{secNonTrivialPseudoQuantumCompact}

All graphs which are (pseudo) compact are (pseudo) quantum compact, since $P(A) \subseteq QP(A)$ and $\ds(\orbAlg') \subseteq \ds(\qOrbAlg')$. In this section we show an example of a graph that is pseudo quantum compact but not pseudo compact.

Let $Y$ be the folded halved $8$-cube graph. Its vertex set consists of the pairs $\{x,\mathbf 1+x\}$, where $x\in \mathbb Z_2^8$ has even Hamming weight, and two vertices $\{x,\mathbf 1+x\}$ and $\{y,\mathbf 1+y\}$ are adjacent if either $x+y$ or $x+y+\mathbf 1$ has Hamming weight two.

\begin{definition}
\label{ex:NonTrivialPseudoCompact}
    Let $\Gamma^1$ be the graph whose vertices are the $8$-cliques in the relevant orbit of $\overline{Y}$ under the action of $\mathbb{Z}_2^6 \times A_8$, with two such vertices adjacent when the corresponding cliques intersect in exactly two points.
\end{definition}

This graph is presented in \cite{schmidt2024quantum} and references therein. It is the first known distance-regular graph to have quantum orbital algebra different than the orbital algebra. Its quantum orbital algebra is equal to the smallest coherent algebra. Thus, we get one non-trivial result.

\begin{proposition}
There is a pseudo quantum compact graph which is not pseudo compact.
\end{proposition}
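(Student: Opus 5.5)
We take $X=\Gamma^1$ from Definition~\ref{ex:NonTrivialPseudoCompact} and verify the two halves separately, using Lemma~\ref{lemmaPseudQuantCompOrbitAlgeb} and Lemma~\ref{lemmaPseudCompactAlgebra}. Both lemmas apply because $\Gamma^1$ is distance-regular, hence regular and connected. So it suffices to show that $\qOrbAlg=\polAlg$ while $\orbAlg\neq\polAlg$.

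\emph{Pseudo quantum compactness.} By \cite{schmidt2024quantum}, the quantum orbital algebra of $\Gamma^1$ equals its smallest coherent algebra $\smallAlg$. Since $\Gamma^1$ is distance-regular, its distance matrices $A_0=I,A_1=A,\dots,A_d$ form a symmetric association scheme. Each $A_i$ is a polynomial in $A$, so the span of the $A_i$ is the Bose--Mesner algebra $\langle A\rangle$. This algebra contains $J=\sum_i A_i$, so it equals $\polAlg$. It is also closed under Schur multiplication and transposition and contains $I$ and $J$, so it is a coherent algebra containing $A$. Hence $\smallAlg\subseteq\polAlg$, and together with $\polAlg\subseteq\smallAlg$ we get $\smallAlg=\polAlg$. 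Therefore $\qOrbAlg=\smallAlg=\polAlg$, and the first part of Lemma~\ref{lemmaPseudQuantCompOrbitAlgeb} shows that $\Gamma^1$ is pseudo quantum compact.

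\emph{Failure of pseudo compactness.} Again by \cite{schmidt2024quantum}, $\orbAlg\neq\qOrbAlg$ for $\Gamma^1$, since it is the example whose quantum orbital algebra differs from its orbital algebra. Combined with the previous paragraph this gives $\orbAlg\neq\polAlg$. Now suppose $\Gamma^1$ were pseudo compact. Since it is regular, the second part of Lemma~\ref{lemmaPseudCompactAlgebra} would give $\orbAlg=\polAlg$, a contradiction. So $\Gamma^1$ is not pseudo compact.

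\emph{Main obstacle.} The difficulty lies entirely in the cited facts about $\Gamma^1$: that it is distance-regular, that $\qOrbAlg=\smallAlg$, and that $\orbAlg\neq\qOrbAlg$. These come from \cite{schmidt2024quantum}, and I would simply invoke them rather than reprove them. If a self-contained check were wanted for $\orbAlg\neq\polAlg$, I would show the automorphism group is not generously transitive, or not distance-transitive, by exhibiting two pairs of vertices at the same distance that lie in different orbitals. Concretely, this amounts to comparing the number of distance classes with the rank of $\operatorname{Aut}(\Gamma^1)$, computed from its action, which contains $\mathbb{Z}_2^6\times A_8$.
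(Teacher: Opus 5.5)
Your proof is correct and follows essentially the same route as the paper: distance-regularity gives $\smallAlg=\polAlg$, the cited results on $\Gamma^1$ give $\qOrbAlg=\smallAlg\neq\orbAlg$, and then the first part of Lemma~\ref{lemmaPseudQuantCompOrbitAlgeb} together with the regular case of Lemma~\ref{lemmaPseudCompactAlgebra} finishes the argument. Your extra justification (the Bose--Mesner algebra is coherent, contains $J$, and hence equals $\polAlg$) spells out a step that the paper states in one line.
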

\begin{proof}
The graph $\Gamma^1$ above is distance-regular, and hence its coherent algebra generated by the adjacency matrix coincides with its adjacency algebra: $\smallAlg=\polAlg$. By \cite{schmidt2024quantum}, its quantum orbital algebra satisfies $\qOrbAlg=\smallAlg$, while $\qOrbAlg\neq \orbAlg$. Therefore $\qOrbAlg=\polAlg$, so Lemma~\ref{lemmaPseudQuantCompOrbitAlgeb} implies that $\Gamma^1$ is pseudo quantum compact. On the other hand, since $\orbAlg\neq \polAlg$, Lemma~\ref{lemmaPseudCompactAlgebra} implies that it is not pseudo compact.
\end{proof}

\subsection{Quantum isomorphism preserving pseudo quantum compactness}

Recall the definition of quantum isomorphic graphs \cite{LUPINI2020108592}.

\begin{definition}
Let $X_1$ and $X_2$ be graphs. We say that $X_1$ and $X_2$ are \emph{quantum isomorphic} if there exists a unital $C^*$-algebra $\mathcal{A}$ and a magic unitary $u\in M_n(\mathcal{A})$ such that $A_{X_1}u=uA_{X_2}$, which means $\sum_k(A_{X_1})_{ik}u_{kj}=\sum_k u_{ik}(A_{X_2})_{kj}$ for all $i\in V(X_1), j \in V(X_2)$.
\end{definition}

Quantum isomorphism preserves the quantum orbital algebra in a way that respects the adjacency matrix. Thus, we can show that it preserves pseudo quantum compactness under the hypothesis of regularity.

\begin{theorem}
    Let $X_1, X_2$ be quantum isomorphic regular graphs. Then $X_1$ is pseudo quantum compact if and only if $X_2$ is pseudo quantum compact.
\end{theorem}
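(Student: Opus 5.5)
By Lemma~\ref{lemmaPseudQuantCompOrbitAlgeb}, for a regular graph pseudo quantum compactness is equivalent to $\qOrbAlg = \polAlg$. So I will show that this algebraic equality transfers across a quantum isomorphism. Both $\qOrbAlg \supseteq \polAlg$ always hold, so the condition is purely a comparison of dimensions: $X$ is pseudo quantum compact if and only if $\dim \qOrbAlg(X) = \dim \polAlg(X)$. Note also that quantum isomorphic graphs have the same number of vertices.

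The key input is the result of Lupini, Man\v{c}inska and Roberson \cite{LUPINI2020108592}. If $X_1$ and $X_2$ are quantum isomorphic, there is a bijection between the quantum orbitals of $X_1$ and those of $X_2$. This bijection induces a linear isomorphism $\phi:\qOrbAlg(X_1)\to\qOrbAlg(X_2)$, which is an isomorphism of coherent algebras. That means it preserves ordinary products, Schur products and transposes, and sends $I\mapsto I$, $J\mapsto J$, and $A_{X_1}\mapsto A_{X_2}$. The last property is what ``respects the adjacency matrix'' refers to. Concretely, the bijection matches a quantum orbital $O$ of $X_1$ with the orbital $O'$ of $X_2$ for which $u_{ik}u_{jl}\neq 0$ for some $(i,j)\in O$, $(k,l)\in O'$, using the quantum isomorphism $u$. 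Equivalently, $M\, u = u\, \phi(M)$ for all $M \in \qOrbAlg(X_1)$, and the adjacency matrix is preserved because $A_{X_1}u = uA_{X_2}$. I will cite this rather than reprove it, adjusting to real scalars as in the proof of Theorem~\ref{thm:CommutantQuantumOrbital}.

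Given $\phi$, the argument is short. Since $\phi$ is a unital algebra isomorphism with $\phi(A_{X_1}) = A_{X_2}$ and $\phi(J)=J$, it maps every polynomial expression in $A_{X_1}$ and $J$ to the same expression in $A_{X_2}$ and $J$. Hence $\phi(\polAlg(X_1)) = \polAlg(X_2)$, and in particular $\dim\polAlg(X_1)=\dim\polAlg(X_2)$. We also have $\dim\qOrbAlg(X_1)=\dim\qOrbAlg(X_2)$. Therefore $\qOrbAlg(X_1)=\polAlg(X_1)$ holds if and only if $\qOrbAlg(X_2)=\polAlg(X_2)$. Regularity of both graphs now lets us apply both directions of Lemma~\ref{lemmaPseudQuantCompOrbitAlgeb} and conclude. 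The relation is symmetric, since $u^*$ gives a quantum isomorphism from $X_2$ to $X_1$.

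The main obstacle is justifying that the isomorphism of quantum orbital algebras is multiplicative, not just a bijection of orbitals, and that it carries $A_{X_1}$ to $A_{X_2}$. If the cited result only provides the Schur-product and transpose structure, I will recover multiplicativity from $Mu = u\phi(M)$. For $M,N\in\qOrbAlg(X_1)$ this gives $MNu = Mu\phi(N) = u\phi(M)\phi(N)$, and also $MNu = u\phi(MN)$. Since $u$ is unitary, multiplying on the left by $u^*$ yields $\phi(MN)=\phi(M)\phi(N)$.
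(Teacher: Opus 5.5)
Your proof is correct and follows essentially the same route as the paper. Both arguments reduce, via Lemma~\ref{lemmaPseudQuantCompOrbitAlgeb}, to the equality of the quantum orbital algebra with the algebra generated by $A$ and $J$, and then transport this equality along the isomorphism of quantum orbital algebras from \cite[Theorem 4.6]{LUPINI2020108592}, which is multiplicative, fixes $I$ and $J$, and sends $A_{X_1}$ to $A_{X_2}$. Your dimension count and your recovery of multiplicativity from $Mu=u\phi(M)$ are harmless extras that spell out what the paper takes directly from the citation.
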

\begin{proof}
    Suppose that $X_1$ is pseudo quantum compact, then by Lemma~\ref{lemmaPseudQuantCompOrbitAlgeb} $\mathcal{P}_J(X_1) = \mathcal{O}(X_1)$.  It then follows from \cite[Theorem 4.6]{LUPINI2020108592} that there is an isomorphism between the quantum orbital algebras of $X_1$ and $X_2$ that is linear, multiplicative, sends $A(X_1)$ to $A(X_2)$, $I$ to $I$ and $J$ to $J$.
    
    This means that the isomorphism not only sends the quantum orbital algebra of $X_1$ to the quantum orbital algebra of $X_2$, but the algebra $\mathcal{P}_J(X_1)$ to $\mathcal{P}_J(X_2)$ and so the equality $\mathcal{P}_J(X_2) = \mathcal{O}(X_2)$ also holds. Thus, using again Lemma~\ref{lemmaPseudQuantCompOrbitAlgeb}, the result follows.
\end{proof}

It remains natural to ask whether regularity can be dropped, and whether quantum compactness itself is invariant under quantum isomorphism. 

\subsection{No non-classical quantum compact graphs}

We have seen in Section~\ref{secNonTrivialPseudoQuantumCompact} that there exist pseudo quantum compact graphs that are not pseudo compact. Our work was originally motivated by the question of whether non-classical quantum compact graphs exist, which we now settle in the negative.

\begin{theorem} \label{thmnoqcompact}
    If $QP(A) = S(A)$, then $P(A) = S(A)$.
\end{theorem}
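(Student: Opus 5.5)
The plan is to show that every extreme point of $S(A)$ is a permutation matrix. Since $S(A)$ is a polytope, this gives $S(A)=P(A)$ by Krein--Milman. So fix an extreme point $M$ of $S(A)$. By hypothesis $M\in QP(A)$, so the fibre $F=\Phi^{-1}(M)$ of the affine map $\Phi$ from the proof of Lemma~\ref{closedQuantumCompactSet} is nonempty.

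\emph{Step 1: choose a pure state.} Because $M$ is extreme in $S(A)=QP(A)=\Phi(\mathcal S(C(\qGroup(X))))$ and $\Phi$ is affine and continuous, $F$ is a weak-$*$ closed face of the state space. By Krein--Milman, $F$ contains an extreme point of the state space, so we may assume $M_{ij}=\rho(u_{ij})$ with $\rho$ a pure state. Lemma~\ref{GNS} then gives an irreducible representation $\pi$ and a unit vector $v$ with $M_{ij}=\|\pi(u_{ij})v\|^2$.

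\emph{Step 2: conditional states.} Fix a row $i$. For each $j$ with $M_{ij}>0$, define the state
\[
\rho_j(x)=\frac{\rho(u_{ij}\,x\,u_{ij})}{\rho(u_{ij})}
\]
on $C(\qGroup(X))$. It is a state because $u_{ij}$ is a projection. Its matrix $N^{(j)}=(\rho_j(u_{kl}))_{k,l}$ lies in $QP(A)=S(A)$. Since $u_{ij}u_{il}u_{ij}=\delta_{jl}u_{ij}$, the $i$-th row of $N^{(j)}$ is the unit vector $e_j$.

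Suppose we can show that $M$ is a convex combination of the $N^{(j)}$, for instance $M=\sum_j M_{ij}N^{(j)}$, or more generally that $M$ lies in the convex hull of matrices in $S(A)$ whose $i$-th row is a standard basis vector. Then extremality of $M$ forces $M=N^{(j)}$ for some $j$. Hence row $i$ of $M$ is a $0/1$ unit vector, and as $i$ is arbitrary, $M$ is a permutation matrix commuting with $A$.

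\emph{The main obstacle} is exactly this convex decomposition. The identity $\sum_j \rho(u_{ij}u_{kl}u_{ij})=\rho(u_{kl})$ holds only when $\rho$ is invariant under the pinching $x\mapsto\sum_j u_{ij}xu_{ij}$, and a general pure state need not be. My first attempt is to exploit the freedom in choosing $\rho$ inside the face $F$. The pinched state $\omega=\sum_j\rho(u_{ij}\,\cdot\,u_{ij})$ is a state with $\omega=\sum_j M_{ij}\rho_j$, so $\Phi(\omega)=\sum_jM_{ij}N^{(j)}\in S(A)$ is already a convex combination of the required form. The issue is to show that $\Phi(\omega)=M$.

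For this I would compare the two matrices entrywise. The difference $\rho(u_{kl})-\omega(u_{kl})$ is a sum of off-diagonal terms $\rho(u_{ij}u_{kl}u_{ij'})$ with $j\neq j'$. I would try to show that this difference lies in the linear span of $S(A)-M$ along a direction in which $M\pm\varepsilon(\cdot)$ stays in $S(A)$. This uses the fact that $S(A)=QP(A)$ is closed under the convolution products from the preceding lemma; for instance, compare $M$ with $M^{\rho}M^{\sigma}$ for suitable auxiliary states. If that fails, a fallback is to replace extremality by exposedness: take a linear functional $L$ exposing $M$ in the polytope $S(A)$. Then $\rho\mapsto L(\Phi(\rho))$ is maximised by every $\rho\in F$. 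One would then show that the pinched state $\omega$ attains the same value of $L$, using $\omega(x)\ge$-type inequalities coming from the projections $u_{ij}$. This would force $\Phi(\omega)=M$.
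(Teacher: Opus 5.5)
There is a genuine gap, and it is the one you identify yourself. Step~1 and the set-up of Step~2 are correct: the fibre $\Phi^{-1}(M)$ is a closed face, and each $N^{(j)}$ lies in $S(A)$ with $i$-th row $e_j$. But everything rests on $M=\sum_j M_{ij}N^{(j)}$, i.e.\ $\Phi(\omega)=\Phi(\rho)$, and you never prove it. Neither fallback supplies it.

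\begin{itemize}
\item The convolution / ``direction in $S(A)-M$'' idea gives no control over the cross terms $\rho(u_{ij}u_{kl}u_{ij'})$ with $j\neq j'$. Purity of $\rho$ by itself gives no reason for them to vanish either.
\item In the exposing-functional version, $\Phi(\omega)\in S(A)$ already gives $L(\Phi(\omega))\le L(M)$ for free. You need the reverse inequality, and pinching does not preserve optimality in general. Write $h=\sum_{k,l}L_{kl}u_{kl}$. Optimality of $\rho$ means $\pi(h)\psi=L(M)\psi$, and $\omega(h)=L(M)$ would require every $\pi(u_{ij})\psi$ to lie in that eigenspace. That is exactly the unproved statement.
\end{itemize}

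What is missing is a way to carry the vertex property of $M$ from matrices over to the GNS vectors.

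This transfer is direct once you use that a vertex is the unique solution of its active linear constraints. Those are the equalities defining $S(A)$ together with $N_{kl}=0$ wherever $M_{kl}=0$. Take any state $\rho$ with $\Phi(\rho)=M$ (no purity needed), with GNS data $(\pi,\psi)$, and set $w_{kl}=\pi(u_{kl})\psi$.

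\begin{itemize}
\item The magic-unitary relations give $\sum_l w_{kl}=\psi=\sum_k w_{kl}$.
\item The relation $uA=Au$ gives $\sum_m A_{ml}w_{km}=\sum_m A_{km}w_{ml}$.
\item Since $\|w_{kl}\|^2=\rho(u_{kl})=M_{kl}$, we get $w_{kl}=0$ wherever $M_{kl}=0$.
\end{itemize}

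Hence for every $\phi\in\mathcal H$ the scalar matrix $(\langle w_{kl},\phi\rangle)$ solves that active system with right-hand side scaled by $\langle\psi,\phi\rangle$. Uniqueness forces $w_{kl}=M_{kl}\psi$. Then $M_{kl}=\|w_{kl}\|^2=M_{kl}^2$, so $M$ is a permutation matrix.

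The paper encodes the same mechanism through the theta body. The Gram matrix of $\{\psi\}\cup\{\pi(u_{xa})\psi\}$ places $\operatorname{vec}(M)$ in $\texttt{TH}(\widehat X)$. The set $S(A)$ is cut out by the clique equalities for $R_x$, $C_a$, $D_{ij}$. At a fractional vertex, the positive semidefinite matrix $B_0-\varepsilon e_pe_p^{\top}$ built from the active constraints violates the quadratic theta-body inequality. Once this step is in place, your pure-state reduction and conditional states are no longer needed.
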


To prove this we use the following standard property of the Grötschel--Lovász--Schrijver theta body defined for a graph $X$: no non-integral vertex of \(\texttt{QSTAB}(X)\) belongs to \(\texttt{TH}(X)\); see \cite[Theorem~1.6]{fujie2002groetschel}. The argument is an elementary consequence of the quadratic-inequality representation of \(\texttt{TH}(X)\), and has already been applied to obtain relevant consequences in quantum information theory in a different setting than ours; see \cite{ramanathan2016no}.

Let $u$ be the fundamental magic unitary from the quantum automorphism group of $X$. Let $\widehat X$ be the graph with vertex set $V(X)\times V(X)$, where we write a vertex of $\widehat X$ as $xa$, where $x,a\in V(X)$, and define adjacency in $\widehat{X}$ by
\begin{equation} xa \sim yb \quad \text{if} \quad u_{xa} u_{yb} = 0. \label{Xhat}
\end{equation}
The results that follow will justify why we may see $\widehat{X}$ as the quantum automorphism incompatibility graph.

For $x,a,i,j\in V(X)$, define
\begin{eqnarray}
    R_x=\{xa : a \in V(X)\},
    \qquad
    C_a=\{xa:x\in V(X)\}, \nonumber \\
    D_{ij}
    =
    \{ia: a \sim j\}
    \cup
    \{bj: b \not\sim i\}. \label{cliques}
\end{eqnarray}

\begin{lemma}
    Let $X$ be a graph and $\widehat{X}$ as defined in \eqref{Xhat}. Then the sets $R_x$, $C_a$ and $D_{ij}$ defined in \eqref{cliques} are cliques of $\widehat{X}$.
\end{lemma}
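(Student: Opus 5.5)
The plan is to reduce every clique condition to the statement that certain pairs of projections in $C(\qGroup(X))$ multiply to zero, using only the magic unitary relations (i)--(ii) and the commutation relation (iii) of Definition~\ref{defQuantumAutomorphismGroup}. By \eqref{Xhat}, a set of vertices of $\widehat X$ is a clique exactly when $u_{xa}u_{yb}=0$ for any two distinct members $xa\neq yb$ of it.

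First, $R_x$ and $C_a$. The elements $u_{x1},\dots,u_{xn}$ are projections whose sum is $1$. A finite family of projections summing to $1$ is mutually orthogonal. To see this, fix $a$ and multiply $\sum_b u_{xb}=1$ on both sides by $u_{xa}$. This gives $u_{xa}+\sum_{b\neq a}u_{xa}u_{xb}u_{xa}=u_{xa}$, so $\sum_{b\neq a}(u_{xb}u_{xa})^*(u_{xb}u_{xa})=0$. A sum of positive elements that vanishes forces each term to vanish, so $u_{xb}u_{xa}=0$ for $b\neq a$. Hence $R_x$ is a clique, and the same argument applied to columns shows that $C_a$ is a clique.

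Next, $D_{ij}$. Two members of $\{ia:a\sim j\}$ lie in the row $R_i$, and two members of $\{bj:b\not\sim i\}$ lie in the column $C_j$, so these pairs are handled by the first step. It remains to treat mixed pairs $ia$ and $bj$ with $a\sim j$ and $b\not\sim i$. First note they are distinct vertices of $\widehat X$: $ia=bj$ would force $a=j$, which is impossible since $X$ has no loops and $a\sim j$.

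The key step is to read off the $(i,j)$ entry of relation (iii), $uA=Au$. It gives
\[
p:=\sum_{a\sim j}u_{ia}=\sum_{b\sim i}u_{bj}.
\]
By the first step, $p$ is a sum of mutually orthogonal projections from row $i$, so $p$ is itself a projection. Moreover $u_{ia}p=u_{ia}$ whenever $a\sim j$. On the other side, column $j$ sums to $1$, so
\[
\sum_{b\not\sim i}u_{bj}=1-p,
\]
where the index set includes $b=i$. Hence $(1-p)u_{bj}=u_{bj}$ whenever $b\not\sim i$. Combining these,
\[
u_{ia}u_{bj}=u_{ia}\,p\,(1-p)\,u_{bj}=0 .
\]
Therefore $D_{ij}$ is a clique.

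I expect the only real point to be this identification of the common projection $p$ coming from $uA=Au$. All the other steps are routine orthogonality of magic-unitary entries.
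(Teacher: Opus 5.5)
Your proof is correct and follows essentially the same route as the paper. Row and column orthogonality of the magic unitary handles $R_x$, $C_a$ and the non-mixed pairs in $D_{ij}$. For mixed pairs, you sandwich the $(i,j)$-entry $p$ of $uA=Au$ between $u_{ia}$ and $u_{bj}$, which is exactly the paper's identity $A_{aj}u_{ia}u_{bj}=A_{ib}u_{ia}u_{bj}$ phrased through the projection $p$ and $p(1-p)=0$. The only difference is that you prove the orthogonality of projections summing to $1$ directly (the final step $c^*c=0\Rightarrow c=0$ is the C$^*$-identity) instead of citing it.
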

\begin{proof}
    The fact that $R_x$ and $C_a$ are cliques for all $x$ and $a$ follow from the standard fact that the elements in each row and column of a magic unitary $u$ are orthogonal; see \cite[Remark 1.1.9]{schmidt2020quantum} for instance. 

    For $D_{ij}$, it remains to consider what happens with one vertex $ia$ with $a\sim j$ and one vertex $bj$ with $b\not\sim i$. This case follows from \cite[Proposition 2.1.3]{schmidt2020quantum}. If $A = A(X)$, it follows that 
    \[ A_{aj}=1 \qquad\text{and}\qquad A_{ib}=0. 
    \]
    From $uA=Au$ we have, for all $x,b\in V(X)$, 
    \[ \sum_z u_{xz}A_{zb} = \sum_z A_{xz}u_{zb}. 
    \]
    Multiplying this equality on the left by $u_{xa}$ and on the right by $u_{yb}$ gives 
    \[ A_{ab}u_{xa}u_{yb} = A_{xy}u_{xa}u_{yb}, 
    \]
    where we used the row and column orthogonality relations of the magic unitary. Hence, if $A_{xy}\neq A_{ab}$, then $u_{xa}u_{yb}=0$. Therefore every two distinct vertices of $D_{ij}$ are adjacent in $\widehat X$, and $D_{ij}$ is a clique.
\end{proof}

We now relate $QP(A)$ and $S(A)$ with convex corners defined by $\widehat{X}$.

For ease of mind when thinking in terms of matrices and vectors, let us consider $\operatorname{vec}: M_V(\mathbb R) \to \mathbb{R}^{V \times V}$ to be the standard ``column stacking operator''. For the result below, we use the standard presentation of the theta body as a projection of lifted positive semidefinite matrices satisfying certain linear constraints; see \cite[Lemma 2.17]{lovasz1991cones}: if $Y$ is a graph, then $v \in \texttt{TH}(Y)$ if $v_a = \widehat{M}_{0a}$, where $\widehat{M}$ is any positive semidefinite matrix indexed by $\{0\}\cup V(Y)$, with $(0,0)$-entry equal to $1$, with $(0,a)$-entry equal to the $(a,a)$-entry for all $a \in V(Y)$, and with $(a,b)$-entry equal to $0$ whenever $a\sim b$ in $Y$.

\begin{lemma}
\label{lemmaQPThetaBody}
    If $M\in QP(A)$, then $\operatorname{vec}(M)\in \texttt{TH}(\widehat X)$.
\end{lemma}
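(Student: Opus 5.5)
We prove the statement by building the lifted positive semidefinite matrix required by the characterization of $\texttt{TH}(\widehat X)$ from \cite[Lemma 2.17]{lovasz1991cones}, working in a GNS representation of the state that produces $M$. Let $M\in QP(A)$ with $M_{xa}=\rho(u_{xa})$ for a state $\rho$ on $C(\qGroup(X))$. By Lemma~\ref{GNS} there are a Hilbert space $\mathcal H$, a representation $\pi$ and a unit vector $v=v_\rho$ with $M_{xa}=\langle \pi(u_{xa})v,v\rangle$. Write $P_{xa}=\pi(u_{xa})$; these are orthogonal projections, since $u_{xa}$ is a self-adjoint idempotent and $\pi$ is a $*$-homomorphism.

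The candidate lift is the Gram matrix of the vectors $w_0=v$ and $w_{xa}=P_{xa}v$ for $xa\in V(\widehat X)=V(X)\times V(X)$. Define $\widehat M$, indexed by $\{0\}\cup V(\widehat X)$, by $\widehat M_{pq}=\langle w_q,w_p\rangle$. A Gram matrix is always positive semidefinite, and here it is finite (of size $n^2+1$) even if $\mathcal H$ is infinite dimensional. Since the $u_{xa}$ are self-adjoint, all its entries are real: $\langle P_{xa}v,P_{yb}v\rangle=\rho(u_{yb}u_{xa})$, and $\overline{\rho(u_{yb}u_{xa})}=\rho(u_{xa}u_{yb})$. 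So $\widehat M$ is real symmetric, or if it is only Hermitian PSD, its real part $\operatorname{Re}\widehat M$ is real PSD and satisfies the same linear constraints as $\widehat M$.

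We then check the required linear constraints.
\begin{enumerate}[(i)]
\item $\widehat M_{00}=\|v\|^2=1$.
\item $\widehat M_{0,xa}=\langle P_{xa}v,v\rangle=M_{xa}$.
\item $\widehat M_{xa,xa}=\|P_{xa}v\|^2=\langle P_{xa}^2v,v\rangle=\langle P_{xa}v,v\rangle=M_{xa}$, using $P_{xa}^*=P_{xa}=P_{xa}^2$.
\item If $xa\sim yb$ in $\widehat X$, then $u_{xa}u_{yb}=0$ by \eqref{Xhat}, so $\langle P_{yb}v,P_{xa}v\rangle=\langle P_{xa}P_{yb}v,v\rangle=\rho(u_{xa}u_{yb})=0$.
\end{enumerate}
Hence the diagonal-to-$0$-row condition holds, and $\widehat M$ vanishes on the edges of $\widehat X$. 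By the cited presentation of the theta body, the vector $(\widehat M_{0,xa})_{xa}=\operatorname{vec}(M)$ lies in $\texttt{TH}(\widehat X)$, where we identify coordinates $xa$ with the column-stacked entries.

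There is no real obstacle. The only points needing care are the ordering conventions of the complex inner product, and reality: we may replace $\widehat M$ by its real part, which preserves positive semidefiniteness and every constraint above because all the constrained entries are real. Alternatively, we can avoid GNS altogether and set $\widehat M_{xa,yb}=\rho(u_{xa}u_{yb})$, obtaining positivity directly from $\rho(z^*z)\ge 0$ applied to $z=c_0 1+\sum c_{xa}u_{xa}$.
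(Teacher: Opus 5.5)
Your proof is correct and is essentially the paper's argument: pass to the GNS representation of $\rho$, take the Gram matrix of $\psi$ and the vectors $\pi(u_{xa})\psi$ as the lifted PSD matrix, and check the four linear constraints. One small correction: the identity $\overline{\rho(u_{yb}u_{xa})}=\rho(u_{xa}u_{yb})$ gives only Hermitian symmetry, not reality, and $\rho(u_{xa}u_{yb})$ can in fact be non-real when $u_{xa}$ and $u_{yb}$ do not commute, so your passage to $\operatorname{Re}\widehat M$ is genuinely needed (a point the paper's proof does not address).
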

\begin{proof}
    Let $M_{xa}=\rho(u_{xa})$ for a state $\rho$ on $C(\qGroup(X))$. By the GNS construction (Lemma~\ref{GNS}), there are a representation $\pi$ on a Hilbert space $\mathcal H$ and a unit vector $\psi$ such that
    \[
        M_{xa}=\langle \pi(u_{xa})\psi,\psi\rangle .
    \]
    Consider the Gram matrix $\widehat M$ of the set of vectors $\{\psi\} \cup \{\pi(u_{xa})\psi\}_{xa}$, which we index by $\{0\}\cup V(\widehat X)$. Its entries are
    \[
        \widehat M_{00}=1,\qquad
        \widehat M_{0,xa}=\widehat M_{xa,0}=M_{xa},
    \]
    and
    \[
        \widehat M_{xa,yb}
        =
        \langle \pi(u_{xa})\psi,\pi(u_{yb})\psi\rangle .
    \]
    This matrix is positive semidefinite. Moreover,
    \[
        \widehat M_{xa,xa}
        =
        \langle \pi(u_{xa})\psi,\pi(u_{xa})\psi\rangle
        =
        \langle \pi(u_{xa})\psi,\psi\rangle
        =
        M_{xa},
    \]
    because each $u_{xa}$ is a projection. Finally, if $xa\sim yb$ in $\widehat X$, then $u_{xa}u_{yb}=0$, and therefore $\widehat M_{xa,yb}=0$. Hence $\operatorname{vec}(M)\in \texttt{TH}(\widehat X)$.
\end{proof}

We next record a useful description of $S(A)$ in terms of clique equalities defined by the cliques in \eqref{cliques}.

\begin{lemma}
\label{lemmaSAcliqueEqualities}
    Let $X$ be a graph, $A=A(X)$. Let $M$ be a non-negative matrix and let $f=\operatorname{vec}(M)$. Then $M\in S(A)$ if and only if
    \[
        f(R_x)=1 \quad \text{for all }x\in V(X),
    \]
    \[
        f(C_a)=1 \quad \text{for all }a\in V(X),
    \]
    and
    \[
        f(D_{ij})=1 \quad \text{for all }i,j\in V(X).
    \]
\end{lemma}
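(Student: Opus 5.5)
We prove the two directions separately. Both reduce to writing out, entry by entry, the three families of linear constraints defining $S(A)$ for a non-negative $M$ with $f=\operatorname{vec}(M)$, so $f_{xa}=M_{xa}$. Doubly stochasticity, given non-negativity, is exactly the row sum condition $\sum_a M_{xa}=1$ for all $x$ together with the column sum condition $\sum_x M_{xa}=1$ for all $a$. These are precisely $f(R_x)=1$ and $f(C_a)=1$. What remains is to show that, in the presence of these equalities, $MA=AM$ is equivalent to $f(D_{ij})=1$ for all $i,j$.

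For the key step, expand $f(D_{ij})$ directly from \eqref{cliques}:
\[
f(D_{ij})=\sum_{a\sim j} M_{ia}+\sum_{b\not\sim i} M_{bj}=(MA)_{ij}+\sum_{b} M_{bj}-\sum_{b\sim i}M_{bj}.
\]
Here $b\not\sim i$ ranges over all $b$ not adjacent to $i$, including $b=i$, since $X$ is loopless. Using the column sum $\sum_b M_{bj}=1$, this becomes
\[
f(D_{ij})=1+(MA)_{ij}-(AM)_{ij}.
\]
One must also check that the two parts of $D_{ij}$ are disjoint, so that $f(D_{ij})$ is the honest sum written above. They are: a vertex $ia$ with $a\sim j$ could equal $bj$ only if $a=j$, but $j\not\sim j$.

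Hence, given the row and column equalities, $f(D_{ij})=1$ holds for all $i,j$ if and only if $(MA)_{ij}=(AM)_{ij}$ for all $i,j$, that is, $MA=AM$.

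For the forward direction, take $M\in S(A)$. Doubly stochasticity gives the $R_x$ and $C_a$ equalities, and the identity above with $MA=AM$ gives $f(D_{ij})=1$. For the converse, the $R_x$ and $C_a$ equalities together with non-negativity give doubly stochasticity, and the $D_{ij}$ equalities then force $MA=AM$, so $M\in S(A)$.

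The only delicate point is the bookkeeping in the expansion of $f(D_{ij})$. The difference $(MA)_{ij}-(AM)_{ij}$ appears only after complementing the sum over non-neighbours of $i$ with the column sum. Once that expansion and the disjointness check are in place, the rest is routine.
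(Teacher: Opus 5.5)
Your proof is correct and follows the same route as the paper: you identify the $R_x$ and $C_a$ equalities with double stochasticity, then use the $j$-th column sum to rewrite $f(D_{ij})=1+(MA)_{ij}-(AM)_{ij}$, so that $f(D_{ij})=1$ is equivalent to $(MA)_{ij}=(AM)_{ij}$. Your extra checks, that the two parts of $D_{ij}$ are disjoint and that $b\not\sim i$ includes $b=i$, are details the paper leaves implicit, and you handle them correctly.
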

\begin{proof}
    The first two families of equalities say precisely that $M$ has all row and column sums equal to $1$, which along with nonnegativity is equivalent to $M$ being doubly stochastic. Assuming these equalities, we have, for all $i,j\in V(X)$,
    \[
        f(D_{ij})
        =
        \sum_{a\sim j}M_{ia}
        +
        \sum_{b\not\sim i}M_{bj}.
    \]
    Since the $j$-th column of $M$ sums to $1$, the equality $f(D_{ij})=1$ is equivalent to
    \[
        \sum_{a\sim j}M_{ia}
        =
        \sum_{b\sim i}M_{bj}.
    \]
    This is exactly the $(i,j)$-entry of the equation $MA=AM$. Hence the displayed equalities are equivalent to $M$ being doubly stochastic and commuting with $A$, that is, to $M\in S(A)$.
\end{proof}

We use the following presentation of the theta body; see \cite[Theorem~1.6]{fujie2002groetschel}. Let $\diag: M_V(\mathbb R) \to \mathbb{R}^{V}$ be the operator defined by $\diag(M)_a = M_{aa}$, for $a \in V$. If $Y$ is a graph, then
\[
\texttt{TH}(Y)=
\left\{
v\in\mathbb R_+^{V(Y)}:
v^\T B v - v^\T \diag(B) \leq 0
\text{ for all } B \in \mathcal M(Y)
\right\},
\]
where
\[
\mathcal M(Y)=
\{B\in\mathbb S_+^{V(Y)}: B_{ab}=0
\text{ whenever }a\neq b\text{ and }a\not\sim b\}.
\]

\begin{lemma}
\label{lemmaFractionalVertexNotTheta}
    Let $M$ be a vertex of $S(A)$. If $M$ is not a permutation matrix, then
    \[
        \operatorname{vec}(M)\notin \texttt{TH}(\widehat X).
    \]
\end{lemma}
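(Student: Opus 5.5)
The plan is to exhibit a single matrix $B\in\mathcal M(\widehat X)$ whose quadratic inequality is violated by $v=\operatorname{vec}(M)$. This is the standard Grötschel--Lovász--Schrijver argument for why non-integral vertices of clique-constrained polytopes lie outside the theta body.

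First I locate $M$ as a vertex of a face of the clique polytope of $\widehat X$. By Lemma~\ref{lemmaSAcliqueEqualities}, $S(A)$ is cut out from the nonnegative orthant by the clique equalities $f(K)=1$ for the cliques $K\in\mathcal K=\{R_x\}\cup\{C_a\}\cup\{D_{ij}\}$. Because $M$ is a vertex of $S(A)$, its support $\operatorname{supp}(v)$ admits no nonzero $w$ supported on $\operatorname{supp}(v)$ with $w(K)=0$ for all $K\in\mathcal K$. So the only vector supported there and satisfying all the equalities is $v$ itself.

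Next, suppose for contradiction that $v\in\texttt{TH}(\widehat X)$. For each clique $K$, the indicator gives the matrix $B_K=\mathbf 1_K\mathbf 1_K^\T$. This matrix is PSD, and its off-diagonal support lies inside edges of $\widehat X$, so $B_K\in\mathcal M(\widehat X)$. The inequality reads
\[
f(K)^2-f(K)\le 0 .
\]
At $f(K)=1$ this holds with equality, so this choice alone gives no contradiction. I would therefore perturb it with a matrix built from the supports. Since $v$ is not a 0/1 vector (a doubly stochastic $0/1$ matrix is a permutation), some coordinate satisfies $0<v_p<1$. Take a combination
\[
B=\sum_{K\in\mathcal K}\lambda_K B_K-\mu\, e_pe_p^\T .
\]
I need $B\succeq0$ on the relevant vectors, and then aim for $v^\T Bv-v^\T\diag(B)>0$. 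Each clique term contributes exactly $0$. The diagonal term contributes
\[
-\mu v_p^2+\mu v_p=\mu v_p(1-v_p)>0 .
\]
The remaining task is to choose $\lambda_K\ge0$ so that $\sum\lambda_K\mathbf 1_K\mathbf 1_K^\T-\mu e_pe_p^\T$ stays PSD.

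The cleaner route is to restrict to the support. Let $U=\operatorname{supp}(v)$ and consider the vectors $\mathbf 1_{K\cap U}$. I first apply the inequality to $v$ restricted to $U$, using the fact that theta-body membership passes to induced subgraphs on the coordinates. The vertex property shows that the span of $\{\mathbf 1_{K\cap U}\}$ is all of $\mathbb R^U$: by duality, the equalities on $U$ have full rank $|U|$. Hence $\sum_K \mathbf 1_{K\cap U}\mathbf 1_{K\cap U}^\T$ is positive definite on $\mathbb R^U$, and subtracting $\mu e_pe_p^\T$ for small $\mu>0$ keeps it PSD. The zero pattern required by $\mathcal M$ holds, since off-diagonal entries arise only within cliques or on the diagonal. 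Evaluating gives
\[
v^\T Bv - v^\T\diag(B)=\sum_K\bigl(f(K)^2-f(K)\bigr)+\mu v_p(1-v_p)=\mu v_p(1-v_p)>0,
\]
which contradicts $v\in\texttt{TH}(\widehat X)$.

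The main obstacle is the full-rank claim together with the care needed to restrict to $U$. I would either extend $B$ by zeros outside $U$, which is still PSD with the right zero pattern so no restriction is needed, or cite \cite[Theorem~1.6]{fujie2002groetschel} directly for this step.
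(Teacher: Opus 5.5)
Your proposal is correct and is essentially the paper's argument: you use the full-rank active constraints at the vertex to make a sum of clique outer products $\mathbf 1_C\mathbf 1_C^{\top}$ positive definite, subtract $\varepsilon e_pe_p^{\top}$ at a fractional coordinate, and note that each clique term contributes $f(C)^2-f(C)=0$, leaving $\varepsilon f_p(1-f_p)>0$. The only difference is cosmetic. The paper handles zero coordinates by adding $e_ve_v^{\top}$ for the active nonnegativity constraints, which contribute $f_v^2-f_v=0$, whereas you restrict the clique vectors to the support and extend by zeros; that zero-extension is self-contained, so the remark about induced subgraphs is unnecessary.
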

\begin{proof}
    Let $f=\operatorname{vec}(M)$. Since $M$ is doubly stochastic, if $M$ is not a permutation matrix, then some entry of $f$ lies strictly between $0$ and $1$.

    Since $M$ is a vertex of $S(A)$, there is a collection of linearly independent active constraints defining $f$. These constraints consist of some non-negativity constraints $f_v=0$ and some of the equalities
    \[
        f(R_x)=1,\qquad f(C_a)=1,\qquad f(D_{ij})=1.
    \]
    Choose such a linearly independent collection whose normal vectors span $\mathbb R^{V(\widehat X)}$ (which is possible since $f$ is a vertex). Let $Z$ be the set of vertices $v$ for which the constraint $f_v=0$ is chosen, and let $\mathcal C$ be the chosen collection of cliques among the $R_x,C_a,D_{ij}$.

    Define
    \[
        B_0=
        \sum_{v\in Z} e_v e_v^\T
        +
        \sum_{C\in\mathcal C}\mathds 1_C\mathds 1_C^\T,
    \]
    where $\mathds 1_C$ is the characteristic vector of the clique $C$. By construction, $B_0$ is positive definite, as it is a sum of scaled projectors onto vectors forming a basis. 
    
    Choose a coordinate $p$ such that $0<f_p<1$. For sufficiently small $\varepsilon>0$, the matrix
    \[
        B=B_0-\varepsilon e_p e_p^\T
    \]
    is still positive semidefinite. Moreover, $B\in\mathcal M(\widehat X)$, because each matrix $\mathds 1_C\mathds 1_C^\T$ is supported on the diagonal and on pairs of adjacent vertices of $\widehat X$.

    We now evaluate the theta-body inequality at $f$. The non-negativity constraints in $Z$ contribute zero, since $f_v=0$ for all $v\in Z$. Each chosen clique equality contributes
    \[
        f^\T\mathds 1_C\mathds 1_C^\T f
        -
        \sum_{v\in V(\widehat X)}(\mathds 1_C\mathds 1_C^\T)_{vv}f_v
        =
        f(C)^2-f(C)
        =
        1^2-1
        =
        0.
    \]
    Therefore
    \[
        f^\T B f-\sum_v B_{vv}f_v
        =
        -\varepsilon f_p^2+\varepsilon f_p
        =
        \varepsilon f_p(1-f_p)
        >
        0.
    \]
    This violates the defining inequalities of $\texttt{TH}(\widehat X)$. Hence $f\notin \texttt{TH}(\widehat X)$.
\end{proof}

The proof of Theorem~\ref{thmnoqcompact} now follows immediately.

\begin{proof}[Proof of Theorem~\ref{thmnoqcompact}]
    Suppose that $QP(A)=S(A)$, and let $M$ be a vertex of $S(A)$. If $M$ is not a permutation matrix, then Lemma~\ref{lemmaFractionalVertexNotTheta} gives
    \[
        \operatorname{vec}(M)\notin \texttt{TH}(\widehat X).
    \]
    On the other hand, since $M\in S(A)=QP(A)$, Lemma~\ref{lemmaQPThetaBody} gives
    \[
        \operatorname{vec}(M)\in \texttt{TH}(\widehat X),
    \]
    a contradiction. Thus every vertex of $S(A)$ is a permutation matrix.

    Since every permutation matrix in $S(A)$ is an automorphism of $X$, every vertex of $S(A)$ belongs to $P(A)$. As $P(A)\subseteq S(A)$ and $S(A)$ is the convex hull of its vertices, we conclude that
    \[
        P(A)=S(A).
    \]
\end{proof}

\section{Symmetries and regularities} \label{secSymmetries}

In this section we relate the properties of compactness and pseudo compactness, classical or quantum, to symmetries of the graph. We start with vertex-transitivity, and then we will talk about distance-regularity and equitable partitions.

\subsection{Vertex transitivity}

A graph is vertex-transitive if its automorphism group acts transitively on its vertices. A known result due to Godsil \cite{GODSIL1997Compact} and Tinhofer \cite{TINHOFER1991253} is that if a graph is regular and compact, then it is vertex-transitive. We can extend this result to the quantum case.

A graph is quantum vertex-transitive if for all $a,b \in V(X)$, the entry $u_{ab}$ from the fundamental representation of the quantum automorphism group of $X$ is not zero; see \cite{LUPINI2020108592}.

\begin{proposition}
    If $X$ is regular and pseudo quantum compact, then $X$ is quantum vertex-transitive.
\end{proposition}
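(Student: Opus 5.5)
The plan is to use the matrix $\frac{1}{n}J$, which lies in $S(A)$ because $X$ is regular. By pseudo quantum compactness, $S(A)=\ds(\qOrbAlg')$, so $\frac1n J\in\qOrbAlg'$. Alternatively, and a bit more directly, Lemma~\ref{lemmaPseudQuantCompOrbitAlgeb} gives $\qOrbAlg=\polAlg$.

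The key observation is that quantum vertex-transitivity is encoded in the diagonal quantum orbitals. By Lupini et~al.~\cite{LUPINI2020108592}, the relation $a\sim_1 b$ iff $u_{ab}\neq 0$ is an equivalence relation whose classes are the quantum orbits of vertices. Moreover, the diagonal matrices $D_O=\sum_{a\in O}e_ae_a^\T$, one for each quantum orbit $O$, lie in $\qOrbAlg$. Indeed, the quantum orbitals contained in the diagonal $\{(a,a)\}$ are exactly the sets $\{(a,a):a\in O\}$. Equivalently, one checks directly that $D_O$ commutes with $u$: if $a\in O$ and $b\notin O$, then $u_{ab}=0$, so $(D_Ou)_{ab}=u_{ab}\mathds 1_{a\in O}=u_{ab}\mathds 1_{b\in O}=(uD_O)_{ab}$.

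So it suffices to show that $\qOrbAlg$ contains no nontrivial diagonal idempotent, namely none other than $0$ and $I$. Since $\qOrbAlg=\polAlg$, every element of $\qOrbAlg$ commutes with every element of $\polAlg'$. Because $X$ is regular, $J$ commutes with $A$, so $J\in\polAlg'$, and hence $D_O$ commutes with $J$. Now $D_OJ$ has all rows indexed by $O$ equal to $\mathbf 1^\T$ and its other rows zero, whereas $JD_O$ has columns indexed by $O$ equal to $\mathbf 1$ and its other columns zero. These agree only if $O=\emptyset$ or $O=V$. Therefore there is a single quantum orbit, so $u_{ab}\neq0$ for all $a,b$, and $X$ is quantum vertex-transitive.

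The main step to be careful with is justifying that the orbit indicators $D_O$ lie in $\qOrbAlg$. This rests on the fact from \cite{LUPINI2020108592} that $u_{ab}\neq 0$ defines an equivalence relation, that is, that the quantum orbits are well defined. With that fact in hand, the direct commutation check above already places $D_O$ in $\qOrbAlg$ by Theorem~\ref{thm:CommutantQuantumOrbital}, and the remainder of the argument is immediate.
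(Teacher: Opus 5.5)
Your proof is correct, but it takes a different route from the paper's.

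\textbf{The paper's route.} The paper works with states. Since $\frac1n J\in S(A)=\ds(\qOrbAlg')$, the lemma describing $\ds(\qOrbAlg')$ as non-negative affine combinations of matrices in $QP(A)$ shows that $\frac1n J$ is such a combination. So for each pair $a,b$ some $M\in QP(A)$ has $M_{ab}>0$. That means some state satisfies $\rho(u_{ab})>0$, hence $u_{ab}\neq 0$.

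\textbf{Your route.} You argue inside the coherent algebra. The diagonal idempotents $D_O$ of $\qOrbAlg$ are the indicators of quantum orbits, and $J$ commuting with them forces a single fiber.

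\textbf{Comparison.} The paper's proof is shorter. It needs nothing about the structure of quantum orbits beyond positivity of states, and it needs neither Lemma~\ref{lemmaPseudQuantCompOrbitAlgeb} nor the equivalence-relation property. Your proof imports the fact from Lupini et al.\ that $u_{ab}\neq 0$ is an equivalence relation, which rests on the counit, antipode and coproduct. In exchange, after Theorem~\ref{thm:CommutantQuantumOrbital} it is purely linear-algebraic. It also exposes a general mechanism: a coherent algebra whose commutant contains $J$ is homogeneous.

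\textbf{A simplification.} Your detour through Lemma~\ref{lemmaPseudQuantCompOrbitAlgeb} and $\polAlg'$ is unnecessary. Your first observation, $\frac1n J\in\qOrbAlg'$, already says that $J$ commutes with every $D_O\in\qOrbAlg$, and the rest of your argument goes through unchanged.
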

\begin{proof}
    Since the graph is regular, $(1/n)J \in S(A)$, and since the graph is pseudo quantum compact, $(1/n)J$ is an affine combination of matrices in $QP(A)$. In particular, for all $a,b \in V(X)$, there is a matrix $M \in QP(A)$ such that $M_{ab} > 0$. Since $M\in QP(A)$, there is a state $\rho$ such that $M_{ab} = \rho(u_{ab}) > 0$, so $u_{ab} \neq 0$.
\end{proof}

Similarly, one obtains that a regular, pseudo compact graph $X$ is vertex-transitive.

\subsection{Generous transitivity}

In \cite{GODSIL1997Compact}, it is shown that regular compact graphs are not only vertex-transitive, but also have a \emph{generously transitive} automorphism group. A permutation group on a set $\Omega$ is generously transitive if, for every $a,b\in\Omega$, there is a group element $g$ such that $g(a)=b$ and $g(b)=a$. Equivalently, all orbitals of the group are symmetric.

Godsil's original proof uses representation theory of the automorphism group. We can recover an analogous conclusion using the coherent-algebra framework.

\begin{theorem}
    Let $X$ be a regular graph with $r$ distinct eigenvalues. If $X$ is pseudo compact,  then $\aut(X)$ is generously transitive with rank $r$, i.e., with $r$ orbitals.
\end{theorem}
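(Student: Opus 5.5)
By Lemma~\ref{lemmaPseudCompactAlgebra}, since $X$ is regular and pseudo compact, we have $\orbAlg = \polAlg$. The plan is to read off both conclusions from this equality: $\polAlg$ is commutative and has dimension $r$, and an orbital algebra with these two properties forces generous transitivity with rank $r$.

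First, I would compute $\dim \polAlg$. Because $X$ is regular, $J$ commutes with $A$, so $\polAlg$ is a commutative algebra of symmetric matrices. It is spanned by the spectral projections of $A$, refined by $J$.

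In the connected case, $J$ is a polynomial in $A$: it is $n$ times the projection onto the eigenspace of the largest eigenvalue $k$, which is simple. Hence $\polAlg = \mathbb R[A]$, which has dimension $r$.

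If $X$ is disconnected, I would not compute $\dim \polAlg$ directly. Instead I would first show connectivity. The earlier argument of this section gives that a regular pseudo compact graph is vertex-transitive (from $(1/n)J \in \ds(\orbAlg')$). If $X$ were disconnected, the projection onto the constant vector would differ from the full $k$-eigenspace projection. I expect this to make $\dim\polAlg$ equal to $r+1$, so some care is needed. I plan to restrict the statement to the connected case, or to handle disconnection by noting that $\polAlg$ still equals the span of the eigenprojections together with $J$. In that case the count of orbitals is $\dim \orbAlg$, and I would interpret ``$r$'' accordingly. This bookkeeping is the one step I expect might need adjustment; the main line assumes $X$ is connected.

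Next, the number of orbitals of $\aut(X)$ equals $\dim \orbAlg$, since the orbital adjacency matrices form a basis of $\orbAlg$. So the rank is $\dim\polAlg = r$.

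For generous transitivity, every matrix in $\polAlg$ is a polynomial in the symmetric matrices $A$ and $J$, so every element of $\orbAlg = \polAlg$ is symmetric. In particular, each orbital matrix $O$ satisfies $O^\T = O$, which says that each orbital is symmetric: if $(a,b)$ lies in an orbital, so does $(b,a)$. The diagonal orbital is a single orbital, equal to the identity, because $X$ is vertex-transitive; equivalently, $I$ is a basis element of the coherent algebra $\polAlg$, and transitivity follows from the previous subsection.

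Finally, symmetric orbitals together with transitivity give, for each pair $a,b$, a group element $g$ with $g(a,b) = (b,a)$, that is, $g(a)=b$ and $g(b)=a$. This is exactly generous transitivity.
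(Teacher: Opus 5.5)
Your argument is correct and is essentially the paper's proof: Lemma~\ref{lemmaPseudCompactAlgebra} gives $\orbAlg=\polAlg$. The symmetry of $\polAlg$ then makes every orbital symmetric, so $\aut(X)$ is generously transitive, and equality of dimensions gives rank $r$. Generous transitivity already contains transitivity, so your separate appeal to vertex-transitivity is unnecessary.

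Your worry about the disconnected case is justified, and it cannot be removed by deriving connectivity from pseudo compactness. For example, $2K_2$ is regular with $\orbAlg=\polAlg=\operatorname{span}\{I,A,J\}$ of dimension $3$ but only $r=2$ distinct eigenvalues. So the rank claim (though not generous transitivity) requires $X$ to be connected; the paper's proof also assumes this tacitly (compare its remark in Section~\ref{secCommutants} and the proof of the quantum corollary).
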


\begin{proof}
    By Lemma \ref{lemmaPseudCompactAlgebra}, if the graph is regular, then pseudo compactness is equivalent to $\orbAlg=\polAlg$.
    
    This immediately implies that $\orbAlg$ is symmetric. It follows that the orbital algebra being symmetric is equivalent to its group being generously transitive; see \cite[section 2.9]{BCN} for reference.

    The rank of the group being equal to the number of eigenvalues follows from the dimension of the orbital algebra and algebra generated by $A$ being the same.
\end{proof}

We also generalize this to the quantum setting. The definition below can also be found in \cite[Section 6.9]{mariiaSobchukPhd}.

\begin{definition}
   Let $X$ be a graph. We say that its quantum automorphism group $\qGroup (X)$ is generously transitive if
   \[u_{ab}u_{ba}\neq 0, \quad \text{for all }a,b\in V(X).\]
\end{definition}

\begin{corollary}
    Let $X$ be a regular graph with $r$ distinct eigenvalues. If $X$ is pseudo quantum compact,  then its quantum automorphism group $\qGroup(X)$ is generously transitive with $r$ orbitals.
\end{corollary}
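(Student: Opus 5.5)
We follow the classical argument for generous transitivity, with the quantum orbital algebra in place of the orbital algebra. By Lemma~\ref{lemmaPseudQuantCompOrbitAlgeb}, since $X$ is regular and pseudo quantum compact, we have $\qOrbAlg=\polAlg$.

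\textbf{Counting orbitals.} The algebra $\polAlg$ is generated by the symmetric matrices $A$ and $J$, and these commute because $X$ is regular. So $\polAlg$ is commutative and consists of symmetric matrices. For a regular graph, $J$ is a polynomial in $A$ on each connected component. More robustly, $\polAlg$ is spanned by the spectral idempotents of $A$ together with the splitting by $J$. In the connected case $\dim\polAlg=r$; if $X$ is disconnected, quantum vertex-transitivity (from the previous proposition) forces all components to be alike, and we count dimensions accordingly. Since the quantum orbitals form a basis of $\qOrbAlg$ (their adjacency matrices are linearly independent $01$-matrices), the number of quantum orbitals is $\dim\qOrbAlg=\dim\polAlg$, which equals $r$ as stated.

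\textbf{Generous transitivity.} We must show $u_{ab}u_{ba}\neq 0$ for all $a,b$. Because $\qOrbAlg=\polAlg$ is symmetric, every basis matrix is symmetric, so every quantum orbital is a symmetric relation: $(a,b)$ and $(b,a)$ lie in the same quantum orbital. By the definition of quantum orbitals in \cite{LUPINI2020108592}, $(a,b)$ and $(c,d)$ lie in the same orbital if and only if $u_{ac}u_{bd}\neq 0$. Taking $(c,d)=(b,a)$ gives $u_{ab}u_{ba}\neq 0$, which is the required condition.

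\textbf{Main obstacle.} The delicate step is justifying that characterisation of quantum orbitals: the relation $u_{ac}u_{bd}\neq0$ is an equivalence relation on pairs, and its classes are exactly the supports of the basis of $\qOrbAlg$. This is established in \cite[Section 3]{LUPINI2020108592}, and we would cite it there. A secondary point is the disconnected case in the dimension count, which we would handle via quantum vertex-transitivity, or else by assuming connectivity as implicitly done in the classical theorem.
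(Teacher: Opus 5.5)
In the connected case your argument is the paper's proof. Lemma~\ref{lemmaPseudQuantCompOrbitAlgeb} gives $\qOrbAlg=\polAlg$. Since $\polAlg$ is generated by commuting symmetric matrices, every quantum orbital is a symmetric relation, and via the characterisation in \cite{LUPINI2020108592} this is exactly $u_{ab}u_{ba}\neq 0$. The number of quantum orbitals is then $\dim\qOrbAlg=\dim\polAlg$. You simply spell out the orbital characterisation in more detail than the paper does.

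The step that would fail is your proposed treatment of the disconnected case. Suppose $X$ is $k$-regular with $c\geq 2$ components. Then the eigenvalue $k$ has multiplicity $c$, so $\frac{1}{n}J$ is a proper subprojection of the spectral idempotent $E_k$. Moreover $JE_\theta=0$ for $\theta\neq k$. Hence $\polAlg$ is spanned by the $E_\theta$ with $\theta\neq k$ together with $\frac{1}{n}J$ and $E_k-\frac{1}{n}J$, which gives $\dim\polAlg=r+1$, however alike the components are. For example, $2K_2$ has $r=2$ and satisfies $\polAlg=\orbAlg$, so it is pseudo quantum compact. Yet it has three quantum orbitals: the diagonal, the edges and the non-edges. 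So quantum vertex-transitivity cannot rescue the count. The ``$r$ orbitals'' part of the statement genuinely requires connectivity, which the paper's proof assumes tacitly when it writes ``connected regularity''.

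Drop the disconnected-case remark and take connectedness as a hypothesis for the orbital count. The generous-transitivity conclusion holds without it.
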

\begin{proof}
    By Lemma~\ref{lemmaPseudQuantCompOrbitAlgeb}, pseudo quantum compactness is equivalent to $\qOrbAlg=\polAlg$. As above, connected regularity implies that $\polAlg$ is generated by the symmetric matrix $A$ and has dimension $r$. Hence all quantum orbital matrices are symmetric, which is equivalent to $u_{ab}u_{ba}\neq 0$ for all $a,b$. The number of quantum orbitals is $\dim \qOrbAlg=r$.
\end{proof}

\subsection{Distance-regularity and equitable partitions}

Let $X$ be a graph. A partition $C_1, \cdots, C_s $ of $V(X)$ is said to be \emph{equitable} if for every ordered pairs $(C_i, C_j)$ the number of vertices in $C_j$ adjacent to a fixed vertex in $C_i$ only depends on $i$ and $j$. A partition is called a \emph{distance partition} with respect to a vertex $v$ if the cells are ordered such that vertices in cell $C_i$ are at distance $i$ from $v$.

A graph $X$ is called \textit{distance-regular} if the distance partition with respect to any of its vertices is equitable. Moreover, it is \emph{distance-transitive} if it is distance-regular and for any pairs $(a, b), (a', b') \in V(X)^2$ with distance between $a$ and $b$ equal to the distance between $a'$ and $b'$, then there is an automorphism that sends $a$ to $a'$ and $b$ to $b'$; see \cite{BCN} for a standard reference.

It is known that not all distance-regular graphs are distance-transitive. In \cite{GODSIL1997Compact}, it was shown that in compact graphs, equitable partitions are related to partitions defined by the orbits of the graph. Moreover, in Corollary 1.4 therein the author shows that if the graph is distance-regular and compact it is distance-transitive. We strengthen this result using our framework.

\begin{proposition}
    Let $X$ be a distance-regular graph. Then it is distance-transitive if and only if it is pseudo compact.
\end{proposition}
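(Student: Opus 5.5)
The plan is to reduce both directions to the algebraic characterisation of pseudo compactness for regular graphs. A distance-regular graph is regular, so Lemma~\ref{lemmaPseudCompactAlgebra} shows that $X$ is pseudo compact if and only if $\orbAlg=\polAlg$. We may also assume $X$ is connected, as distance-regular graphs are by convention. Then $\polAlg$ is the Bose--Mesner algebra of the distance scheme: it has basis the distance matrices $A_0=I,A_1=A,\dots,A_d$, where $d$ is the diameter, and its dimension is $d+1$, the number of distinct eigenvalues.

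For the forward direction, assume $X$ is distance-transitive. The orbitals of $\aut(X)$ are then exactly the distance relations $\{(a,b):\mathrm{dist}(a,b)=k\}$. Automorphisms preserve distance, so every orbital lies inside one distance class. Conversely, distance-transitivity says that any two pairs at the same distance lie in the same orbital. Hence the orbital matrices are precisely $A_0,\dots,A_d$, which gives $\orbAlg=\operatorname{span}\{A_0,\dots,A_d\}=\polAlg$. By Lemma~\ref{lemmaPseudCompactAlgebra}, $X$ is pseudo compact.

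For the converse, assume $X$ is pseudo compact, so $\orbAlg=\polAlg$ by Lemma~\ref{lemmaPseudCompactAlgebra}. Two facts give $d+1$ orbitals.
\begin{itemize}
\item The orbital matrices form a basis of $\orbAlg$, so there are $\dim\orbAlg=\dim\polAlg=d+1$ of them.
\item Each orbital is contained in a single distance class, since automorphisms preserve distance.
\end{itemize}
There are $d+1$ distance classes, all nonempty because $X$ has diameter $d$, and the $d+1$ orbitals partition $V(X)^2$. A partition into $d+1$ parts that refines a partition into $d+1$ nonempty parts must coincide with it, so each orbital equals a full distance class. Therefore any two pairs at equal distance are related by an automorphism, which means $X$ is distance-transitive.

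The only step needing care is the dimension count. Here I would use the standard facts that orbital matrices are $0/1$ matrices with disjoint supports, hence linearly independent, and that a connected distance-regular graph of diameter $d$ has exactly $d+1$ eigenvalues, so $\dim\polAlg=d+1$ with $J=\sum_k A_k\in\polAlg$. If disconnected distance-regular graphs are allowed, I would restrict attention to the connected case, noting that the intended setting assumes connectivity.
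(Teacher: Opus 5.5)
Your proof is correct and follows essentially the same route as the paper. Both arguments use regularity to invoke Lemma~\ref{lemmaPseudCompactAlgebra}, identify $\polAlg$ with the $(d+1)$-dimensional Bose--Mesner algebra spanned by the distance matrices, and match the orbitals of $\aut(X)$ with the distance relations. Your dimension count in the converse (each orbital lies in one of the $d+1$ nonempty distance classes, and there are exactly $d+1$ orbitals) simply spells out the step the paper states in one sentence, that ``the orbitals must coincide with the distance relations.''
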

\begin{proof}
    Let $d$ be the diameter of $X$. For a distance-regular graph, the distance matrices form the Bose--Mesner algebra, and this algebra is generated by $A$; hence $\polAlg=\smallAlg$ and has dimension $d+1$. If $X$ is distance-transitive, then the orbitals of $\aut(X)$ are precisely the distance relations, so $\orbAlg$ also has dimension $d+1$ and equals $\polAlg$.

    Conversely, if $X$ is pseudo compact, Lemma~\ref{lemmaPseudCompactAlgebra} gives $\orbAlg=\polAlg$. Since $\polAlg$ is spanned by the distance matrices, the orbitals must coincide with the distance relations. Thus $X$ is distance-transitive.
\end{proof}

Thus, on distance-regular graphs, pseudo compactness characterizes distance-transitivity, whereas compactness is strictly stronger: the Petersen graph is distance-transitive and pseudo compact but not compact.

We therefore find it natural to define the following:
\begin{definition}
    A graph $X$ is said to be \emph{quantum distance-transitive} if it is distance-regular and 
    $\polAlg = \qOrbAlg$.
\end{definition}

An equivalent formulation, closer to the classical definition, asks that whenever $\operatorname{dist}(a,b)=\operatorname{dist}(a',b')$, the pairs $(a,b)$ and $(a',b')$ lie in the same quantum orbital, that is, $u_{aa'}u_{bb'}\neq 0$ in the convention of \cite{LUPINI2020108592}.

\begin{proposition}
    A distance-regular graph $X$ is quantum distance-transitive if and only if it is pseudo quantum compact.
\end{proposition}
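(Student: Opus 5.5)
The plan mirrors the classical proposition just proved, with $\qOrbAlg$ in place of $\orbAlg$ and Lemma~\ref{lemmaPseudQuantCompOrbitAlgeb} in place of Lemma~\ref{lemmaPseudCompactAlgebra}. First I record the structural facts about a distance-regular graph $X$ of diameter $d$. It is regular. Its distance matrices $A_0=I,A_1=A,\dots,A_d$ span the Bose--Mesner algebra. This algebra is generated by $A$ and contains $J=\sum_i A_i$, so it equals $\polAlg$, which has dimension $d+1$ and coincides with $\smallAlg$.

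For the forward direction, suppose $X$ is quantum distance-transitive. By definition $\polAlg=\qOrbAlg$. The first statement of Lemma~\ref{lemmaPseudQuantCompOrbitAlgeb} then gives that $X$ is pseudo quantum compact; regularity is not even needed here.

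For the converse, suppose $X$ is pseudo quantum compact. Since a distance-regular graph is regular, the second statement of Lemma~\ref{lemmaPseudQuantCompOrbitAlgeb} gives $\qOrbAlg=\polAlg$. Together with distance-regularity, this is precisely the definition of quantum distance-transitivity.

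I would add a remark justifying the ``equivalent formulation'' stated before the proposition. Both $\qOrbAlg$ and $\polAlg$ are coherent algebras, and their standard bases are, respectively, the quantum orbital matrices and the distance matrices. Equality of the two coherent algebras forces their bases of primitive $0/1$ matrices to coincide. Hence the quantum orbitals are exactly the distance relations. The only point needing care is identifying $\polAlg$ with a coherent algebra whose basis is the distance matrices. This holds because the Bose--Mesner algebra is closed under Schur product and the $A_i$ are its primitive idempotents for that product. I do not expect any real obstacle: the substance lies in Lemma~\ref{lemmaPseudQuantCompOrbitAlgeb}, and the definition was chosen to make this equivalence immediate.
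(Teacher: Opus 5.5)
Your proof is correct and matches the paper's argument: both directions come from the two halves of Lemma~\ref{lemmaPseudQuantCompOrbitAlgeb} (the converse using that distance-regular graphs are regular), combined with the definition of quantum distance-transitivity. Your closing remark identifying the quantum orbitals with the distance relations is correct but optional, since the proposition does not need it.
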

\begin{proof}
    Since $X$ is distance-regular, $\polAlg=\smallAlg$. The assertion follows from Lemma~\ref{lemmaPseudQuantCompOrbitAlgeb} and the definition of quantum distance-transitivity.
\end{proof}

All distance-transitive graphs are quantum distance-transitive, but the converse is not necessarily true. The graph $\Gamma^1$ in Definition \ref{ex:NonTrivialPseudoCompact} is one non-trivial example.

\section{Conclusion and open questions}

\label{sec:conclusion}

This work connects the notion of compactness of a graph with quantum symmetries, and the structure of the coherent algebras containing the adjacency matrix. A main tool is the repeated use of the double centralizer theorem, simplifying the proof of known results and allowing for their extension to the quantum case. We defined a new object, the set of doubly stochastic matrices that are obtained by evaluating the fundamental magic
unitary of the quantum automorphism group on states, and showed how it is related to the quantum orbital algebra and the quantum automorphism group of the graph. We gave an example of a graph that is pseudo quantum compact but not pseudo compact and we introduced connections to quantum symmetries and regularities of the graph.

Our main result is that a quantum compact graph is (classically) compact.

We leave the following open questions:

\begin{itemize}
    \item Is $QP(A)$ always polytope? If it is not, can one characterize under which conditions, be it through the structure of the graph or its automorphisms, is this set a polytope?
    \item We know that deciding isomorphism for compact graphs can be done in polynomial time. We find it interesting to ask if quantum isomorphism can be decided (efficiently or not) for compact graphs.
    \item We would like to understand better what quantum isomorphisms gives us. We know that it preserves pseudo-compactness and the quantum orbital algebra structure. It also preserves doubly stochastic matrices inside the quantum orbital algebra. But does it ''preserve" its commutant, meaning it sends quantum fractional automorphisms of one graph to quantum fractional automorphisms of the other? Will it be a polytope for one graph if and only if it is for the other?
\end{itemize}

\section*{Acknowledgements}

We acknowledge Thiago Assis, who participated in the early stages of this project and contributed to some of the initial ideas. Pedro Baptista and Gabriel Coutinho were supported by CNPq and FAPEMIG. Chris Godsil acknowledges support from NSERC grant RGPIN-9439. Simon Schmidt acknowledges support from the Deutsche Forschungsgemeinschaft
(DFG, German Research Foundation) under Germany’s Excellence Strategy – EXC 2092 CASA
– 390781972.

\bibliographystyle{unsrt}
\bibliography{qcomgraph}

\end{document}